\documentclass[11pt]{article}
\usepackage[table]{xcolor}
\usepackage{booktabs}
\usepackage{tabularx}
\usepackage{longtable}
\usepackage{array}
\usepackage{float}
\usepackage[T1]{fontenc}
\usepackage[utf8]{inputenc}
\usepackage{comment}
\usepackage{natbib}
\usepackage{graphicx}
\usepackage{amssymb}
\usepackage{orcidlink}

\usepackage{amsmath,amsthm,mathtools}
\newtheorem{theorem}{Theorem}
\newtheorem{lemma}{Lemma}
\newtheorem{proposition}{Proposition}
\theoremstyle{remark}

\usepackage{threeparttable}
\usepackage{siunitx}
\usepackage[ruled,vlined]{algorithm2e}

\usepackage{JABES_manu}
\usepackage{marvosym} 
\usepackage{hyperref} 
\graphicspath{{FIGURES/}}

\begin{document}

\title{Bayesian Analysis of Covariate-Driven Hawkes Processes with Application in Plant Epidemiology}

\author{
    Katarzyna Adamczyk-Chauvat$^{1,*}$ \orcidlink{0000-0001-7953-9153}\and Madalina Deaconu$^2$ \and Sylwester Masny$^3$ \orcidlink{0000-0002-0052-7625} \and Guillaume Kon Kam King$^1$
    \thanks{\\
        $^1$MaIAGE Research Unit, INRAE, Domaine de Vilvert, Université Paris‑Saclay, 78352 Jouy‑en‑Josas Cedex, France.\\
        $^2$Université de Lorraine, CNRS, Inria, IECL, F-54000 Nancy, France. \\
        $^3$Plant Protection Department,  National Institute of Horticultural Research, Skierniewice, Poland. \\
        $^*$Corresponding author: Katarzyna.Adamczyk@inrae.fr
    }
}

\date{}
\makeatletter
\renewcommand{\@fnsymbol}[1]{}
\makeatother
\maketitle

\begin{center}
    \textbf{Abstract}
\end{center}

Hawkes processes are widely used to model temporal point patterns
exhibiting self-exciting behavior. In this work, we consider a Hawkes
model with a baseline intensity depending on a set of dynamic
covariates and an exponential triggering function. A Bayesian approach
is applied to variable selection in this context. Spike-and-slab
priors are proposed to enable a parsimonious choice of relevant
variables. We also provide practical simulation procedures for the
fitted covariate-driven Hawkes model, which support posterior
predictive checks and probabilistic forecasting.  This approach is
illustrated by the example of spore release of \textit{Venturia
  inaequalis}, the fungus responsible for apple scab disease. We study
the dependence of primary spore releases on environmental covariates
and highlight the self-exciting behavior of subsequent releases. The
Bayesian framework also provides a natural basis to forecast future
spore releases.  By delivering interpretable posterior inclusion
probabilities alongside uncertainty-aware predictive risk bands, this
new methodology provides a rigorous foundation for managing temporal
event risks driven by both external environmental forcing and internal
contagion.

\vspace*{.1in}
\noindent\textsc{Keywords}: {Hawkes processes; Bayesian variable selection; spike-and-slab priors; spore release; apple scab}

\newpage

\section{Introduction}
\label{sec:Int}

Hawkes processes are used to model temporal point patterns exhibiting
self-exciting behavior: an earthquake followed by a sequence of
aftershocks; a post on a social network generating tweets or shares;
an infected host spreading a disease, etc.  They have become a popular
tool for numerous applications, including seismology
\citep{hawkes1,ogata}, finance for high-frequency trading and market
activity \citep{bacry}, social networks for information diffusion
\citep{wang}, and epidemiology for modelling the spread of infectious
diseases \citep{schoenberg,park}.  Additional examples of Hawkes
process applications and extensions are discussed in recent review
papers \citep{bernabeu,laub}.

The Hawkes process is defined as a Poisson cluster process
\citep{Hawkes2}, with independent clusters generated around center
points according to a branching structure.  The cluster centers,
referred to as \textit{immigrants}, are assumed to follow a Poisson
process with intensity $\lambda(t)$, meaning that the immigrants occur
spontaneously.  In practice, this assumption is often unrealistic: the
points that initiate the triggering process are triggered themselves
by exogenous covariates.  For instance, in the plant epidemiology
example which motivated this work, releases of pathogen fungus spores
are driven by meteorological conditions such as leaf wetness,
humidity, pressure and light.  In this paper, we consider the temporal
Hawkes process with a baseline intensity depending on covariates
through an appropriate link function. The model extends the
inhomogeneous Poisson process with covariate-dependent intensity. The
selection of relevant covariates is the main focus here, and a
Bayesian framework is adopted to address this question.

Several strands of literature are related to this problem.  Bayesian
inference for Hawkes processes was developed by
\citet{rasmussenBayesianInferenceHawkes2013} for parametric models,
using Markov chain Monte Carlo methods to approximate the posterior
distribution of the parameters.  Nonparametric Bayesian inference for
multivariate Hawkes processes was subsequently studied by
\citet{donnetNonparametricBayesianEstimation2020, sulem}, with a focus
on posterior concentration for the interaction functions. Bayesian
methods have also been developed for nonlinear Hawkes processes,
including variational Bayesian inference with Gaussian-process
self-effects \citep{malem-shinitskiVariationalBayesianInference2022},
and recent computational advances based on Integrated Nested Laplace
Approximation have been proposed for Epidemic-Type Aftershock Sequence
models and related Hawkes formulations \citep{serafini2023}.  Other
recent contributions address incomplete temporal information,
including count data, binned observations and interval-censored Hawkes
processes \citep{cheyssonSpectralEstimationHawkes2022,
  shlomovichParameterEstimationBinned2022,
  rizoiuIntervalcensoredHawkesProcesses2022}.  These works provide
important tools for Bayesian or likelihood-based inference in Hawkes
models, but they do not directly address Bayesian variable selection
for exogenous covariates entering the baseline intensity.  Hawkes and
self-exciting point process models with covariates have also been
considered in several application domains. For example,
\citet{reinhartSelfExcitingPointProcesses2018} proposed a
spatio-temporal self-exciting point process incorporating spatial
covariates for crime data. In neuroscience,
\citet{fallahiMultivariateGeneralizedLinear2023} introduced a
multivariate generalized linear Hawkes process with covariates in the
intensity function and considered simultaneous estimation and variable
selection in a high-dimensional regime using penalised
estimation. More recently,
\citet{perrinHawkesProcessDiffusiondriven2025} studied a Hawkes
process whose baseline intensity is driven by an external diffusion
process, establishing probabilistic properties and maximum-likelihood
inference results. These contributions demonstrate the importance of
external covariates in self-exciting point process models.  However,
variable selection in this literature has mainly been addressed
through penalised frequentist methods, or through model selection for
sparse interaction structures in multivariate Hawkes processes.  By
contrast, Bayesian variable selection for covariates entering the
baseline intensity remains comparatively unexplored.  A Bayesian
formulation is particularly useful in this setting because it provides
posterior inclusion probabilities for the covariates and allows
parameter uncertainty and selection uncertainty to be propagated
naturally into probabilistic forecasts.

In this paper, we consider a univariate Hawkes process with a baseline
intensity depending on dynamic covariates through a log-linear link
function. The model extends the inhomogeneous Poisson process with
covariate-dependent intensity by adding a self-exciting component. The
main inferential objective is to identify which covariates affect the
baseline rate while simultaneously estimating the strength and time
scale of self-excitation. To this end, we develop a Bayesian
variable-selection framework based on spike-and-slab priors on the
regression coefficients. This provides posterior inclusion
probabilities for the covariates, regularised estimation of their
effects, and uncertainty quantification for both the exogenous and
self-exciting components of the model.

The main contributions of this work are threefold. First, we introduce
a covariate-driven Hawkes model in which the baseline intensity is a
log-linear function of dynamic exogenous covariates, while the
triggering component is represented by an exponential kernel.  Second,
we develop a Bayesian variable-selection framework for this model
using spike-and-slab priors, allowing covariate relevance to be
assessed through posterior inclusion probabilities. Third, we provide
exact simulation procedures for the fitted covariate-driven Hawkes
process, including an explicit implementation of Ogata's
modified-thinning construction for the covariate-dependent baseline
and conditional forecasting setting considered here, as well as a
conditional cluster representation (detailed with proofs of validity
and complexity analysis in Appendix~\ref{sec:supp-simulation}). These
simulation tools are used for posterior predictive checking and
probabilistic forecasting, thereby propagating posterior uncertainty
in the model parameters.

The proposed methodology is applied to ascospore release data for
\textit{Venturia inaequalis}, the fungus responsible for apple scab
disease. Accurate probabilistic forecasts of ascospore release risk
are important for disease management, because they can support
decisions on fungicide applications and help reduce unnecessary
treatments. The Bayesian framework developed here provides both a
parsimonious selection of relevant meteorological variables and a
probabilistic description of future release risk conditional on
observed or forecast covariate trajectories.

The remainder of the paper is organised as follows. Section~2 recalls
the definition of the Hawkes process and introduces the
covariate-driven extension. Section~3 presents the likelihood, prior
distributions and posterior sampling strategy, including the
spike-and-slab variable-selection layer. Section~4 provides a
simulation study illustrating posterior concentration and covariate
selection. Section~5 applies the model to apple scab risk assessment
and presents posterior predictive checks and probabilistic forecasting
experiments. Section~6 concludes with a discussion of possible avenues
for improvement and potential generalizations. Methodological
derivations, exact simulation algorithms with proofs of validity, and
MCMC convergence diagnostics are provided in the Appendix.


\section{Hawkes process}
\label{sec:hawkes}
\subsection{Temporal point process}

A temporal point process models the times of occurrences of a random
event, represented by the points $\{t_i\}$ on a half-line.  We can
define it as a counting process $\{N (t), t \geq 0\}$ where $N(t)$
represents the total number of events up to time $t$. Consequently,
$N(t)$ satisfies the following conditions:
\begin{itemize}
    \item[(i)] $N(t) \geq 0$;
    \item[(ii)] $N(t)$ is integer-valued;
    \item[(iii)] If $s<t$ then
          $N(s)\leq N(t)$;
    \item[(iv)] for $s<t$, $N(t)-N(s)$ equals the number of
          events that have occurred in the time interval $(s,t]$.
\end{itemize}

As demonstrated in \citep{daley}, the probability structure of the
process $\{N (t), t \geq 0\}$ is uniquely determined by the
\textit{conditional intensity function}, defined as follows:
\begin{equation}
    \label{condInt}
    \lambda^{\star}(t) = \lim_{dt \rightarrow 0^{+}}
    \frac{\mathbb{P}\Bigl(N(t+dt)-N(t)=1 \;\big|\; \mathcal{F}_{t^-}\Bigr)}{dt},
\end{equation}
where the filtration $\mathcal{F}_{t^-}$ represents the information
available before time $t$. The conditional intensity (\ref{condInt})
can be interpreted as the instantaneous rate of point arrivals at time
$t$, given the filtration $\mathcal{F}_{t^-}$.  For a reference
Poisson process, the intensity is independent of the history: it is
constant for a homogeneous process and varies over time in the
inhomogeneous case.

\subsection{Univariate Hawkes process}
The (homogeneous) Hawkes process is an example of a counting process
with a random conditional intensity function:
\begin{equation}
    \label{condIntHawkes}
    \lambda^{\star}(t) = \mu + \sum_{t_i < t} \phi(t - t_i),
\end{equation}
where $\mu >0$ is the baseline intensity, representing the rate of
spontaneous events, and $\phi : \mathbb{R}_+ \longrightarrow
\mathbb{R}_+$ is the excitation function, which quantifies the impact
of past events occurring at times $t_i < t$ on the intensity at the
current time $t$. The dependence on past events induces randomness in
the intensity (\ref{condInt}). A common choice for the excitation
function, adopted in this paper, is the exponential kernel:
\begin{equation}
    \phi(t) = \alpha e^{-\beta t},
\end{equation}
where \( \alpha >0\) determines the value of intensity jump when a new
event occurs, and \( \beta >0 \) controls the rate of its decay.  It
can be shown that the model is stationary for \( 0 < \alpha < \beta
\).  An alternative parametrisation of the model will be used in the
sequel:
\[
    (\rho,\beta)=\left(\frac{\alpha}{\beta},\beta\right),
\]
with \( \beta > 0 \) and \( 0 < \rho < 1 \).  Under this
parametrisation, the excitation function is

\begin{equation}
    \label{expKernel}
    \phi(t)=\rho\beta e^{-\beta t},
\end{equation}
so that \(\int_0^\infty \phi(t)\,dt=\rho\).

According to the immigration-birth representation of the Hawkes
process \citep{Hawkes2}, the parameter $\rho$ can be interpreted as
the expected number of direct offspring generated by a single event.

The exponential kernel is not essential for the modelling framework,
but it is particularly convenient for inference.  More generally, one
may consider any kernel $\phi:\mathbb{R}_+\to[0,\infty)$ such that
  $\phi\in L^1_{\mathrm{loc}}(\mathbb{R}_+)$ and
\[
    \|\phi\|_1:=\int_0^\infty \phi(u)\,du < 1.
\]
The quantity $\|\phi\|_1$ is the branching ratio of the linear Hawkes
process.  The condition $\|\phi\|_1<1$ is the general subcriticality
condition preventing unstable growth of the excitation component.
Likelihood-based inference further requires that the primitive
$\Phi(t)=\int_0^t\phi(u)\,du$ be available in closed form or be
computationally inexpensive.  Likelihood evaluation is particularly
efficient for the exponential kernel or finite mixtures of exponential
kernels thanks to a recursive relation (see Appendix \ref{likcomp}).

\subsection{Covariate-driven extension}
Although the assumption of a constant baseline intensity in the model
(\ref{condIntHawkes}) can be relaxed, models with time-varying
intensity have received comparatively less attention in the
literature. In practice, events corresponding to this part of the
model are rarely spontaneous. In this paper, we assume that they have
been triggered by some exogenous covariates that may vary over
time. Let $X(t)$ denote the covariate process and $x(t)$ a realised
covariate vector, where $X(t)=(X_1(t),\ldots,X_p(t))^\top$ denotes the
$p$-dimensional vector of covariates observed at time $t$. Given
$X(t)=x(t)$, the \textit{covariate-driven Hawkes model} is defined as:

\begin{equation}
    \label{condIntHawkesInh}
    \lambda^{\star}_{\theta}(t) =g\left(\mu+x(t)^\top\gamma\right)  + \sum_{t_i < t} \phi(t - t_i),
\end{equation}
where $\gamma \in \mathbb{R}^p$ is a $p$-dimensional vector of
\textit{regression coefficients}, $\mu$ is a model intercept, and $g :
\mathbb{R} \rightarrow (0, \infty)$ is a suitable link function,
ensuring the positivity of baseline intensity.  In the following, $g$
is exponential, and the excitation function is given by
(\ref{expKernel}). The conditional intensity
($\ref{condIntHawkesInh}$) is indexed by the $(p+3)$-dimensional
vector of model parameters $\theta=(\mu,\gamma,\rho,\beta)$. For prior
elicitation, it will be convenient to work with the equivalent decay
time-scale $\eta=\beta^{-1}$, because prior information is more
naturally expressed on that scale, as explained in the prior section
below.


\section{Bayesian inference for variable selection}
\label{sec:3}
This section specifies likelihood, prior and posterior distributions
of model parameters.

\subsection{Likelihood function}
Let $\{t_1,\ldots,t_{n}\}$ be a point pattern observed on $(0,T]$ for
some fixed observation period $T>0$. Let us assume that the underlying
point process is specified by the conditional intensity
$\lambda^{\star}_\theta(t)$ for
$\theta\in\Theta\subset\mathbb{R}^{p+3}$, given in
\eqref{condIntHawkesInh}. According to \citep{daley}, Proposition
7.3.III, p. 251, the likelihood function is given by the equation:
\begin{equation}
    \label{lik}
    L(t_1,\ldots,t_{n}\mid\theta)=\prod_{i=1}^{n}\lambda^{\star}_{\theta}(t_i)\exp\left(-\int_0^T\lambda_{\theta}^{\star}(u)du\right
    ).
\end{equation}

We restrict attention to covariate processes represented as piecewise
constant on a fixed time grid.  This assumption is mainly
computational: it gives a closed-form expression for the compensator
in the likelihood, since the integral of the covariate-dependent
baseline intensity is then evaluated as a finite sum over the chosen
grid. Such a representation is common and often natural in
applications where covariates are observed, updated, or controlled at
discrete times.  This includes, for example, treatment indicators or
dosage levels that remain unchanged until the next clinical decision,
operational states of a system that change only when a new regime is
imposed, and sensor-based covariates sampled sufficiently often
relative to their intrinsic time scale of variation. More generally,
when the measurement time scale is shorter than the time scale on
which the covariates change appreciably, the values recorded at two
consecutive measurement times provide a natural step-function
approximation of the covariate trajectory. The motivating application
of Section~\ref{scab} falls squarely within this setting, since the
meteorological variables are recorded five times per hour and are not
expected to change appreciably between consecutive measurements.

Let $0=\tau_0 < \tau_1 < \ldots < \tau_{K-1} < \tau_K=T$ be a
partition of the interval $(0,T]$ into a set of subintervals
  $(\tau_{k-1},\tau_{k}]$ of a constant length $c$ for $k=1,\ldots,K$.
    The vector of covariates is piecewise constant, e.g. $\forall t\in
    (\tau_{k-1},\tau_{k}]$, $x(t)=x(\tau_k)=x_{\tau_k}$ for
      $k=1,\ldots,K$. Let us denote by $x_{t_1},\ldots,x_{t_n}$ the
      values of the covariates calculated for the observed time points
      $t_1,\ldots,t_n$.  The log-likelihood function for the
      covariate-driven Hawkes model (\ref{condIntHawkesInh}) is given
      by:

\begin{eqnarray}
    \label{likHawkes}
    l(t_1,\ldots,t_{n}|\theta)&=&\mu+x_{t_1}^{\top}\gamma+\sum_{i=2}^n\log\left( \exp(\mu+x_{t_i}^{\top}\gamma)+\rho\beta A(i)\right)\\
    &-&c\sum_{k=1}^K\exp(\mu+x_{\tau_k}^{\top}\gamma) + \rho\sum_{i=1}^{n}\left(\exp(-\beta(T-t_i))-1\right)\nonumber
\end{eqnarray}
where $A(i)=\sum_{j=1}^{i-1} \exp(-\beta(t_i-t_j))$ for $i>1$ and
$A(1)=0$. The following recurrence relation for $A(i)$ holds:
\begin{equation}
    \label{rec}
    A(i)=\exp\left(-\beta(t_i-t_{i-1}) \right)(1+A(i-1))\text{ for } i\geq 2.
\end{equation}
The derivation of (\ref{likHawkes}) is given in the Appendix
(\ref{likcomp}).  The relationship (\ref{rec}), derived in
\citep{ozaki} for the homogeneous Hawkes processes, brings the cost of
computing the likelihood from quadratic to linear in the number of
observed events $n$.

\subsection{Prior distributions}

Let $Z=(Z_1,\ldots,Z_p)$ denote the latent inclusion indicators
associated with the spike-and-slab prior, and let $\omega$ denote the
common prior inclusion probability.  Although the likelihood is
written in terms of $\beta$, it is more convenient to specify prior
information on the equivalent time-scale $\eta=\beta^{-1}$. We
therefore work in the transformed coordinates $(\mu,\gamma,\rho,\eta)$
for prior elicitation and assume prior independence between the
intercept $\mu$, the branching ratio $\rho$, the kernel time-scale
$\eta$, and the variable-selection layer $(Z,\omega)$, so that
\[
    \pi(\mu,\gamma,\rho,\eta,Z,\omega)
    =
    \pi(\mu)\,\pi(\rho)\,\pi(\eta)\,\pi(\omega)
    \prod_{j=1}^p \pi(Z_j\mid \omega)\,\pi(\gamma_j\mid Z_j).
\]
This induces a corresponding prior $\pi(\theta,Z,\omega)$ in the
original parametrisation $\theta=(\mu,\gamma,\rho,\beta)$.  When
possible, we derive prior distributions with support on the full
admissible parameter space, but we choose to allocate most of the
prior mass to the substantively plausible regions in that space.  This
weakly informative strategy provides regularisation in finite samples
without excluding parameter values that may be strongly supported by
the data, thereby allowing the likelihood to dominate as information
accumulates.

\paragraph{Spike-and-slab prior for $\gamma$.}
The baseline intensity in model (\ref{condIntHawkesInh}) depends on
the linear predictor
\[
    \mu + x(t)^\top \gamma,
\]
where only a subset of the covariates may have a practically
meaningful effect.  To induce sparsity, we place independent
spike-and-slab priors on the regression coefficients
\citep{mitchell,ishwaran}.  For each $j=1,\ldots,p$, let
\[
    Z_j=
    \begin{cases}
        1 & \text{if covariate } x_j \text{ is included}, \\
        0 & \text{otherwise}.
    \end{cases}
\]
Conditionally on $Z_j$, we assume
\[
    \gamma_j \mid Z_j,\sigma_{\text{spike}},\sigma_{\text{slab}}
    \sim
    (1-Z_j)\,\mathcal N(0,\sigma_{\text{spike}}^2)
    +
    Z_j\,\mathcal N(0,\sigma_{\text{slab}}^2),
    \qquad \sigma_{\text{spike}} \ll \sigma_{\text{slab}},
\]
where $\sigma_{\text{spike}} \ll \sigma_{\text{slab}}$ indicates that the spike standard deviation is chosen much smaller than the slab standard deviation.

This is an example of so-called absolutely continuous spike-and-slab
prior \citep{georgeVariableSelectionGibbs1993, ishwaran,
  malsiner2024}.  This choice is standard in the spike-and-slab
literature, where both Dirac spikes and narrow continuous spikes are
commonly used.  The indicators are taken conditionally independent
given the common inclusion probability $\omega$,
\[
    Z_j \mid \omega \sim \mathrm{Bernoulli}(\omega),
    \qquad
    \omega \sim \mathrm{Beta}(a_\omega,b_\omega).
\]

A widely used default is the Scott--Berger choice
$\omega\sim\mathrm{Beta}(1,1)$, which induces a Beta--Binomial prior
on the model size and provides a Bayesian multiplicity correction
relative to the discrete-uniform prior on the model space
\citep{scottberger2010}. Since the present application involves a
small and fixed number of candidate covariates, we adopt this neutral
default rather than a more aggressively sparse prior.

Throughout, the covariates are standardized prior to analysis.  Hence,
for a one-standard-deviation change in $x_j$, the quantity
$\exp(\gamma_j)$ is the multiplicative modification of the baseline
rate.  This gives a direct interpretation to the spike-and-slab
scales.

The spike scale $\sigma_{\text{spike}}$ is chosen to encode
\emph{practical irrelevance}.  More precisely, we regard a covariate
as practically irrelevant if a one-standard-deviation change in that
covariate modifies the baseline rate by at most about $10\%$.  Since
the spike prior is symmetric on the log scale, we choose
\[
    \sigma_{\text{spike}}=\frac{\log(1.1)}{1.96}\approx 0.05.
\]
With this choice, the central $95\%$ prior interval under the spike
component is
\[
    \exp(\gamma_j)\in
    \bigl[\exp(-1.96\sigma_{\text{spike}}),\,\exp(1.96\sigma_{\text{spike}})\bigr]
    =
    [0.91,1.10],
\]
which corresponds to practically negligible modifications of the
baseline rate.

By contrast, the slab scale $\sigma_{\text{slab}}$ is chosen to be
weakly informative on the standardized covariate scale, allowing
substantial but still plausible multiplicative effects on the baseline
rate \citep{gelman2008,joshi2022}.  In the application, we use
$\sigma_{\text{slab}}=1$, which implies the approximate $95\%$ prior
interval
\[
    \exp(\gamma_j)\in
    \bigl[\exp(-1.96\sigma_{\text{slab}}),\,\exp(1.96\sigma_{\text{slab}})\bigr]
    =
    [0.14,7.10].
\]
This interval is deliberately broad.

We use a continuous spike rather than a point mass at $0$, both for
computational convenience and because it permits a direct
interpretation in terms of practical rather than exact irrelevance.  A
comparison between continuous and point-mass spikes is given in
\citep{malsiner2024} and suggests comparable selection performance
between the two.  We also considered horseshoe priors and obtained
similar empirical behaviour, and therefore do not report these
additional results here.

\paragraph{Prior for the intercept $\mu$.}
In a simple Poisson model with constant intensity, a standard Bayesian
choice is a Gamma prior on the intensity scale because it is conjugate
qto the Poisson likelihood \citep[Section 2.6]{gelman2014bda}.  In the
present model, however, the baseline intensity is linked to the
covariates through
\[
    \lambda_0(t)=\exp\bigl(\mu+x(t)^\top\gamma\bigr),
\]
and there is no conjugacy requirement.  It is therefore more natural
to specify the prior directly on the intercept $\mu$, which represents
the \emph{log baseline intensity at average covariate values}.

Weakly informative prior recommendations for generalized linear models
suggest calibrating priors on an interpretable outcome scale rather
than adopting extremely diffuse priors on the coefficient scale
\citep{gelman2008,joshi2022}.  In the present setting, the natural
quantity is the expected number of independent events
(\emph{immigrant} events) over the observation window.  Since the
observation horizon $T$ is known in every application, it is
reasonable to elicit prior information on that scale.  Under the
baseline component alone, the expected number of immigrant events over
$(0,T]$ is
\[
    M_0(\mu)=T\exp(\mu).
\]
We therefore choose a practical lower and upper bound $(m_L,m_U)$ for
the number of immigrant events regarded as plausible a priori, and
define
\[
    \mu \sim \mathcal N(m_\mu,\sigma_\mu^2),
    \qquad
    m_\mu=\frac{\log m_L+\log m_U}{2}-\log T,
    \qquad
    \sigma_\mu=\frac{\log m_U-\log m_L}{2\times 1.96}.
\]
With this choice,
\[
    \mathbb P\!\left(m_L \leq T\exp(\mu)\leq m_U\right)\approx 0.95.
\]

This calibration is intentionally conservative for a Hawkes process.
Indeed, under the branching representation, observed events consist of
immigrant events together with their offspring, and the parameter
$\rho$ represents the expected number of direct offspring generated by
a single event.  Hence, when $\rho>0$, a given total number of
observed events can be achieved with fewer immigrant events than in
the Poisson submodel $\rho=0$.  Choosing the prior on $\mu$ from
plausible total event counts under the worst-case scenario $\rho=0$
therefore ensures that the model remains sufficiently flexible even
before self-excitation is taken into account
\citep{rasmussenBayesianInferenceHawkes2013,serafini2023}.  In the application of
Section~\ref{scab}, the observation window has length $T=1416$, and we
take $(m_L,m_U)=(1,3000)$, which yields
\[
    \mu \sim \mathcal N(-3.25,\,2.04^2).
\]

\paragraph{Prior for the branching ratio $\rho$.}
The branching ratio satisfies $\rho\in(0,1)$ by construction, and we
assign the weakly informative prior
\[
    \rho \sim \mathrm{Beta}(a_\rho,b_\rho).
\]
In the application, we use the uniform choice
$\mathrm{Beta}(1,1)$ on the admissible interval.

\paragraph{Prior for the kernel time-scale $\eta=\beta^{-1}$.}
For the exponential kernel, $\eta$ is the characteristic decay
time-scale of the self-exciting effect.  More precisely, the residual
contribution of a single event at lag $t$ is proportional to
$\exp(-t/\eta)$.  We assign
\[
    \eta \sim \mathcal N^+(0,\sigma_\eta),
\]
where $\mathcal N^+$ denotes the half-normal distribution.  The scale
$\sigma_\eta$ should be chosen from expert knowledge on the duration
of the self-exciting effect.  If $h^\star$ denotes a time horizon
beyond which the residual excitation is regarded as negligible, and
$q$ denotes the corresponding residual fraction, then the condition
\[
    \exp(-h^\star/\eta)\le q
\]
is equivalent to
\[
    \eta \le \frac{h^\star}{\log(1/q)}.
\]
We choose $\sigma_\eta$ such that this inequality holds with prior
probability $0.95$.  In the application, expert knowledge suggests
that less than $5\%$ of the excitation should remain after $24$ time
units, that is, $(h^\star,q)=(24,0.05)$.  This gives the upper bound
\[
    \eta^\star=\frac{24}{\log(20)}\approx 8.01,
\]
and hence
\[
    \mathbb P(\eta\le \eta^\star)=0.95
    \qquad\Longrightarrow\qquad
    \sigma_\eta \approx \frac{8.01}{1.96}=4.09.
\]

Alternatively, one may rescale time by the observation horizon $T$ and
work with the dimensionless time-scale $\tilde\eta=\eta/T$.  If, on
the rescaled time axis, one requires that less than a fraction $q$ of
the excitation remains after a lag $h^\star\in(0,1)$, then
\[
    \exp(-h^\star/\tilde\eta)\le q
    \qquad\Longleftrightarrow\qquad
    \tilde\eta\le \frac{h^\star}{\log(1/q)}.
\]
For example, taking $h^\star=0.1$ and $q=0.05$ yields the upper
bound
\[
    \tilde\eta^\star=\frac{0.1}{\log(20)}\approx 0.033,
\]
and the corresponding half-normal scale is
\[
    \sigma_{\tilde\eta}
    =
    \frac{\tilde\eta^\star}{1.96}
    \approx 0.017
\]
if one wishes this bound to hold with prior probability $0.95$.

\begin{table}[htbp]
    \centering
    \caption{Prior distributions of model parameters and values of
        hyperparameters used in Section~\ref{scab}.}
    \label{tab:priors}

    \renewcommand{\arraystretch}{1.2}
    \rowcolors{2}{gray!15}{gray!30}

    \begin{tabular}{llll}

        \rowcolor{gray!60}
        \color{white} Parameter                                                                  &
        \color{white} Prior distribution                                                         &
        \color{white} Hyperparameters                                                            &
        \color{white} Application                                                                  \\

        $\gamma_j$                                                                               &
        $(1-Z_j)\mathcal{N}(0,\sigma_{\text{spike}}^2)+Z_j\mathcal{N}(0,\sigma_{\text{slab}}^2)$ &
        $\sigma_{\text{spike}},\sigma_{\text{slab}}$                                             &
        $\sigma_{\text{spike}}=0.05,\ \sigma_{\text{slab}}=1$                                      \\

        $Z_j \mid \omega$                                                                        &
        $\mathrm{Bernoulli}(\omega)$                                                             &
        $\omega\sim \mathrm{Beta}(a_\omega,b_\omega)$                                            &
        $a_\omega=1,\ b_\omega=1$                                                                  \\

        $\mu$                                                                                    &
        $\mathcal{N}(m_\mu,\sigma_\mu^2)$                                                        &
        $m_\mu,\sigma_\mu$                                                                       &
        $m_\mu=-3.25,\ \sigma_\mu=2.04$                                                            \\

        $\rho$                                                                                   &
        $\mathrm{Beta}(a_\rho,b_\rho)$                                                           &
        $a_\rho,b_\rho$                                                                          &
        $a_\rho=1,\ b_\rho=1$                                                                      \\

        $\eta=\beta^{-1}$                                                                        &
        $\mathcal{N}^{+}(0,\sigma_\eta)$                                                         &
        $\sigma_\eta$                                                                            &
        $\sigma_\eta=4.09$                                                                         \\
    \end{tabular}
\end{table}

\subsection{Sampling from the posterior distribution}

The likelihood specification and prior distributions lead to the joint
posterior distribution:

\begin{equation*}
    \pi(\theta,Z,\omega\mid t_1,\ldots,t_{n})\propto L(t_1,\ldots,t_{n}\mid\theta) \pi(\theta,Z,\omega).
\end{equation*}

The posterior was marginalised over the discrete inclusion indicators,
and posterior sampling for the continuous parameters was performed
from the resulting marginal posterior using the No-U-Turn Sampler
(NUTS), an adaptive variant of Hamiltonian Monte Carlo
\citep{hoffman}, implemented in RStan \citep{rstan}.  Convergence of
the MCMC chains was assessed using standard RStan diagnostics,
including traceplots, the Gelman–Rubin statistic and effective sample
size calculations.

Posterior inclusion probabilities, denoted by
\begin{equation}
    \label{eq:pip}
    \delta_j = \mathbb P(Z_j=1\mid t_1,\ldots,t_n), \qquad j=1,\ldots,p,
\end{equation}
were then obtained from the conditional distribution of the latent
indicators $Z$ given the sampled continuous parameters. The decision
of including a covariate $x_j$ into the model was based on
thresholding the posterior inclusion probability $\delta_j$ at $0.5$.


\section{Simulation example}

We consider a numerical example to illustrate the spike-and-slab
approach and to assess the effect of increasing the observation
horizon $T$ on posterior concentration. The model is a simplified
version of the one fitted to the real data in Section \ref{scab}. The
prior specification remains the same as in the application (see Table
\ref{tab:priors}). Data were simulated using the algorithm based on
the immigration--birth representation of the Hawkes process
(Appendix~\ref{sec:supp-cluster}).

\paragraph{Observation horizon $T$}

The dataset was generated over the interval $(0,16000]$. To assess the
  effect of an increasing observation horizon on posterior
  concentration, subsets of the initial dataset, restricted to the
  intervals $(0,2000]$, $(0,4000]$, and $(0,8000]$, were considered as
        well.

\paragraph{Baseline intensity $\lambda_{0}(t)$} The baseline intensity involves two periodic
covariates and is specified as:

\begin{equation*}
    \log(\lambda_{0}(t))
    =
    -1.3
    +
    1.2\sin\left(\frac{2\pi t}{24}\right)
    +
    0.05\cos\left(\frac{2\pi t}{24}\right).
\end{equation*}

This choice was motivated by the dataset from Section~\ref{scab},
which includes meteorological covariates with a daily periodic
pattern.

The intercept $\mu=-1.3$ corresponds to a baseline immigration rate of
$\exp(-1.3)\approx 0.273$ events per unit of time, in the absence of
covariate effects. The coefficients $\gamma_1=1.2$ and $\gamma_2=0.05$
jointly determine the amplitude of the daily variation in
$\lambda_0(t)$.  The cosine component is practically irrelevant:
$\gamma_2$ is of the same order of magnitude as the spike standard
deviation ($\sigma_{\text{spike}}=0.05$), while $\exp(\gamma_2)\approx
1.05$ corresponds to about a $5\%$ multiplicative change in the
baseline rate. In contrast, $\gamma_1=1.2$ is of the order of the slab
standard deviation ($\sigma_{\text{slab}}=1$), indicating that the
sine component captures the main daily variation.

\paragraph{Branching ratio $\rho$ and kernel time-scale $\eta$}
The branching ratio $\rho$ was set to $0.3$, corresponding to a
moderate level of self-excitation.  The kernel time-scale parameter
$\eta$ was fixed at $1$, so that, for the exponential kernel, less
than $5\%$ of the excitation remains after the threshold
\begin{equation*}
    h^\star=-\log(0.05)\approx 2.99
\end{equation*}
time units.

\paragraph{Sampling setup} Posterior sampling was performed using the NUTS Sampler
implemented in the RStan library (see Section \ref{sec:3}). Posterior
inference was based on two MCMC chain runs, each consisting of 40,000
iterations, including a warm-up period of 20,000 iterations.

Figure~\ref{fig:sc1:1} displays the prior and posterior distributions
of the coefficients $\gamma_1$ and $\gamma_2$. The posterior
distribution of $\gamma_{1}$ is clearly shifted away from zero,
whereas the posterior distribution of $\gamma_{2}$ is concentrated
around zero. This illustrates the capacity of spike-and-slab to
distinguish between the relevant covariate and the negligible one.

Figure~\ref{fig:sc1:2} displays the evolution of the marginal
posterior distributions of $\mu$, $\gamma_1$, $\gamma_2$, $\rho$ and
$\eta$ as a function of $T$. As the observation interval increases,
the posterior distributions become more concentrated around the true
parameter values. The marginal posterior distributions of the
inclusion probabilities $\delta_1$ and $\delta_2$ (defined in
\eqref{eq:pip}) are also shown in Figure~\ref{fig:sc1:2}. The
posterior distribution of the inclusion probability associated with
the sine covariate is strongly concentrated near 1, while that
associated with the cosine covariate is concentrated near 0.

\begin{figure}
    \centering
    \includegraphics[width=.9\linewidth]{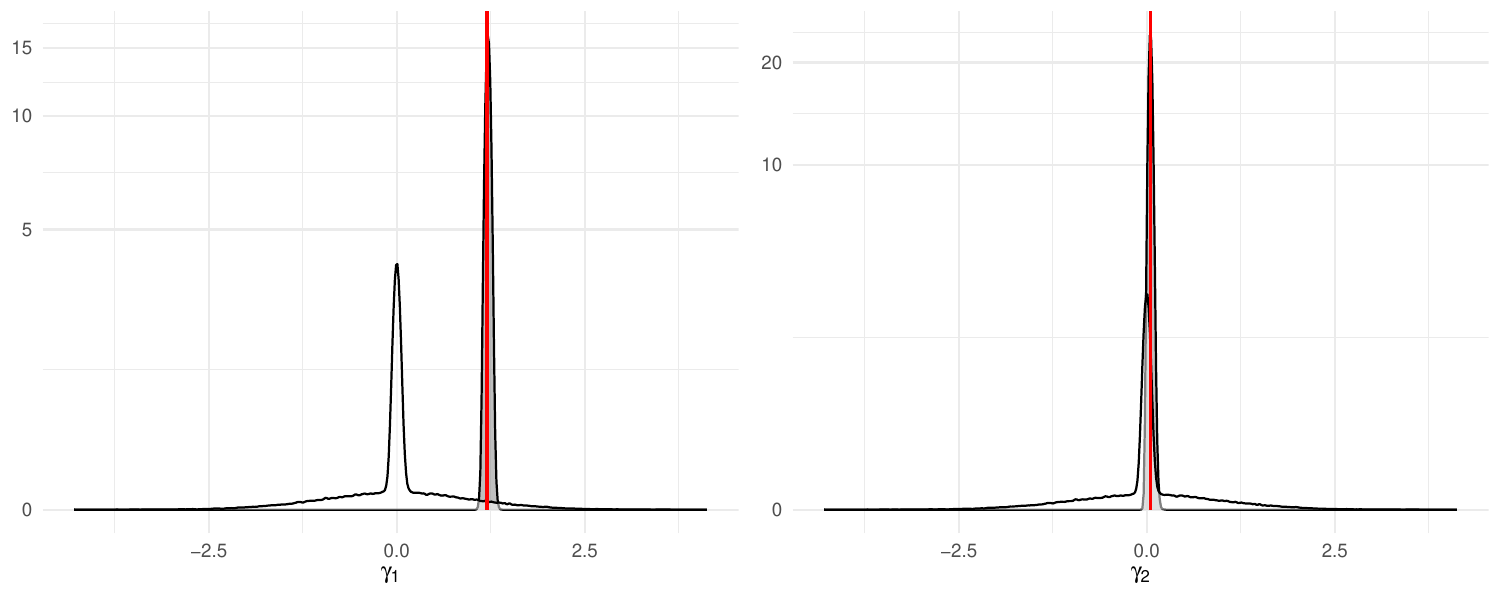}
    \caption{Simulation example: marginal posterior densities of the
        coefficients $\gamma_1=1.2$ and $\gamma_2=0.05$ (grey-shaded
        areas), together with the spike-and-slab prior distributions
        (unshaded areas).}
    \label{fig:sc1:1}
\end{figure}

\begin{figure}
    \centering
    \includegraphics[width=0.9\linewidth]{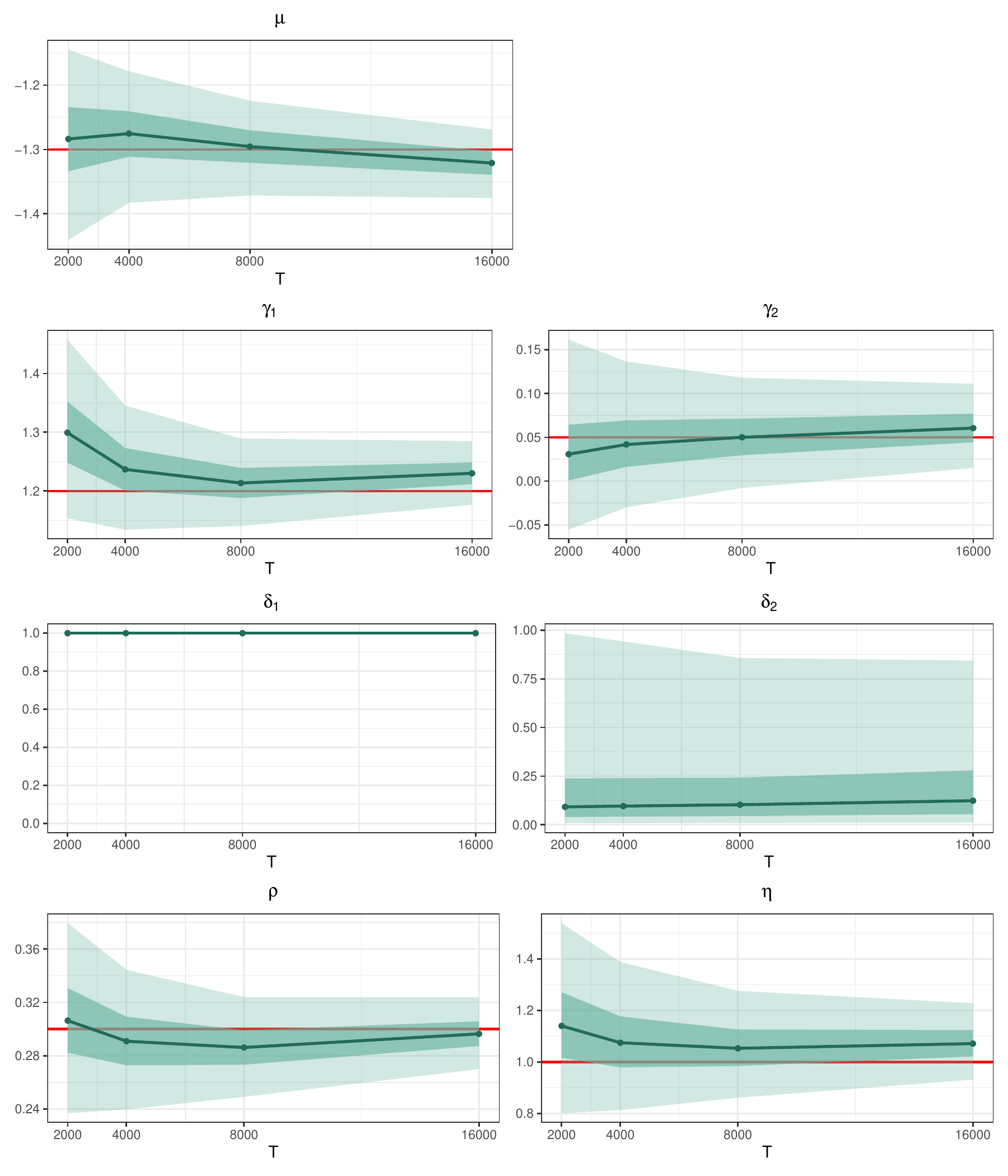}
    \caption{
        Evolution of the posterior distributions as a function of
        the observation horizon T. The dark shaded band represents the
        central 50$\%$ credible interval, while the
        light shaded band represents the 95$\%$ credible
        interval. The solid green line
        indicates the posterior median, and  the horizontal red line indicates
        the true parameter value.}
    \label{fig:sc1:2}
\end{figure}

\section{Application: apple scab risk assessment}
\label{scab}

Apple scab caused by the fungus \textit{Venturia inaequalis} (Cooke)
G. Winter is one of the most serious diseases of apple trees in apple
production areas. It occurs annually on leaves and fruits, causing
significant yield losses. Currently, the most effective way to protect
apple trees against apple scab is to use broad-spectrum mode
fungicides. The number of treatments per year can reach up to 20,
depending on environmental conditions. Accurate forecasting of fungus
ascospore releases is essential to reduce the number of treatments.

Research on apple scab has been conducted worldwide for more than two
centuries (\citep{creemers2006}; \citep{meszka2015}). The life cycle
of \textit{Venturia inaequalis} includes a saprotrophic phase (feeding
on decaying organic matter) and a parasitic phase.  The saprotrophic
phase occurs once per year and develops on fallen apple leaves
infected during the previous growing season, beginning shortly after
leaf fall.  Once mature, ascospores are discharged under favorable
environmental conditions over an average period of 5–9 weeks in the
United States \citep{sutton2014} and 7–11 weeks in Poland, causing
primary infections in apple trees.  Ascospore discharge increases with
the moisture level of infected leaves on the orchard floor, for
example following rainfall or dew formation
\citep{stensvand1998}. After leaf wetting, water enters the
pseudothecia through the ostiole or pores in the pseudothecial wall
\citep{gadoury1998} and subsequently reaches the asci by osmosis
\citep{aylor1991}. As pressure increases within physiologically mature
asci \citep{gadoury1982}, they swell and rupture, leading to ascospore
discharge.  In northern Italy, where climatic conditions differ
substantially from those in Poland, approximately half as many release
events were recorded, and rainfall was identified as the sole trigger
of ascospore release \citep{rossi2001}. In contrast, airborne
ascospores have frequently been detected during dew events in England
and Norway \citep{stensvand1998}. The primary role of leaf wetness was
also observed in Polish conditions \citep{masny2012}.

These results clearly indicate that the effects of weather variables
vary across geographical regions. Commonly used simulation models of
ascospore releases (e.g. A-scab, \citep{rossi2007ascab}) are based on
deterministic approaches that do not account for this source of
variability. Significant discrepancies between predictions and
observed data have been reported under specific weather conditions,
highlighting the limitations of these models \citep{Masny}. A
stochastic framework offers a promising alternative, allowing for the
quantification of forecast uncertainty. In this context, the
covariate-driven Hawkes process is a suitable candidate for modelling
the series of release events and for selecting the relevant weather
variables.

\begin{figure}
    \centering
    \includegraphics[width=\linewidth]{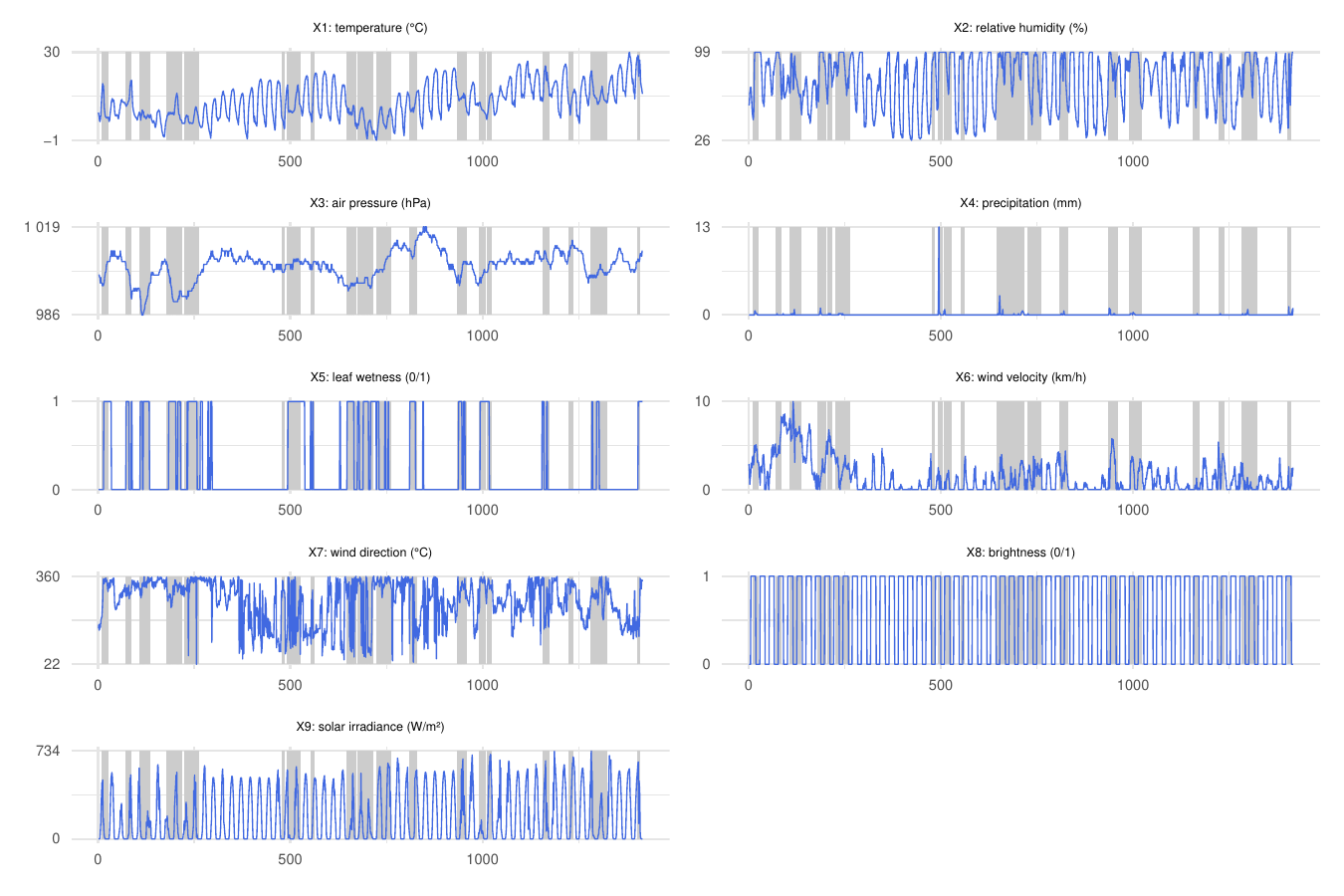}
    \caption{Dynamics of environmental covariates associated with
        spore release in 2011.  Vertical segments represent the instants
        of spore releases. Observation period ranges from $0$ to
        $1416$ hours.}
    \label{fig:covariates}
\end{figure}

\subsection{Experimental data}

Monitoring of ascospore release was carried out in the experimental
orchard of the National Institute of Horticultural Research in
Poland. The study began in early April, when mature ascospores were
abundantly present in the pseudothecia, and ended in the second half
of June. A Burkard spore trap, installed in a plot of an untreated
apple cultivar, recorded spore presence at hourly intervals. The data
used for model fitting were collected in 2011, with 215 spore release
events recorded between April 4 and June 15 (see Figure
\ref{fig:covariates}). Weather data were simultaneously collected by a
local meteorological station at a finer temporal resolution (five
measurements per hour). The hourly means were calculated for
continuous variables, while binary variables were set to their maximum
value across subintervals.  The resulting dataset included nine
environmental covariates: air temperature $X_1(t)$, relative humidity
$X_2(t)$, air pressure $X_3(t)$, precipitation $X_4(t)$, leaf wetness
$X_5(t)$, wind velocity $X_6(t)$, wind direction $X_7(t)$, brightness
$X_8(t)$ and solar irradiance $X_9(t)$ (see Figure
\ref{fig:covariates}). The covariates, displayed here in their
original units, were standardized for analysis.

\subsection{Results}

Posterior samples were obtained from four independent chains, each run
for 50,000 iterations, with the first half of iterations discarded as
warmup to ensure convergence. The convergence diagnostics are provided
in the Appendix \ref{diag}. The priors and the hyperparameters for
this example are listed in Table \ref{tab:priors}.

Figure \ref{fig:gammas} displays the marginal posterior densities of
the regression coefficients together with the spike-and-slab
priors. The densities of $\gamma_1$, $\gamma_4$, $\gamma_6$,
$\gamma_7$, and $\gamma_9$ were shrunk toward zero, whereas the
remaining coefficients had their posterior mass concentrated on values
shifted away from zero and therefore correspond to the selected
covariates. Two of the three moisture-related variables, leaf wetness
and air humidity, have a positive effect on spore releases. In
contrast, the posterior mass of the coefficient $\gamma_4$ associated
with precipitation was shrunk toward the spike component, suggesting
no clear effect.  Higher air pressure can be associated with a
decrease in spore release intensity, with a posterior mean coefficient
$\bar{\gamma}_3=-0.27$.  The results show also that spore release
tends to occur during the daytime ($\bar{\gamma}_8=0.57$).

\begin{figure}[H]
    \centering
    \includegraphics[width=.9\linewidth]{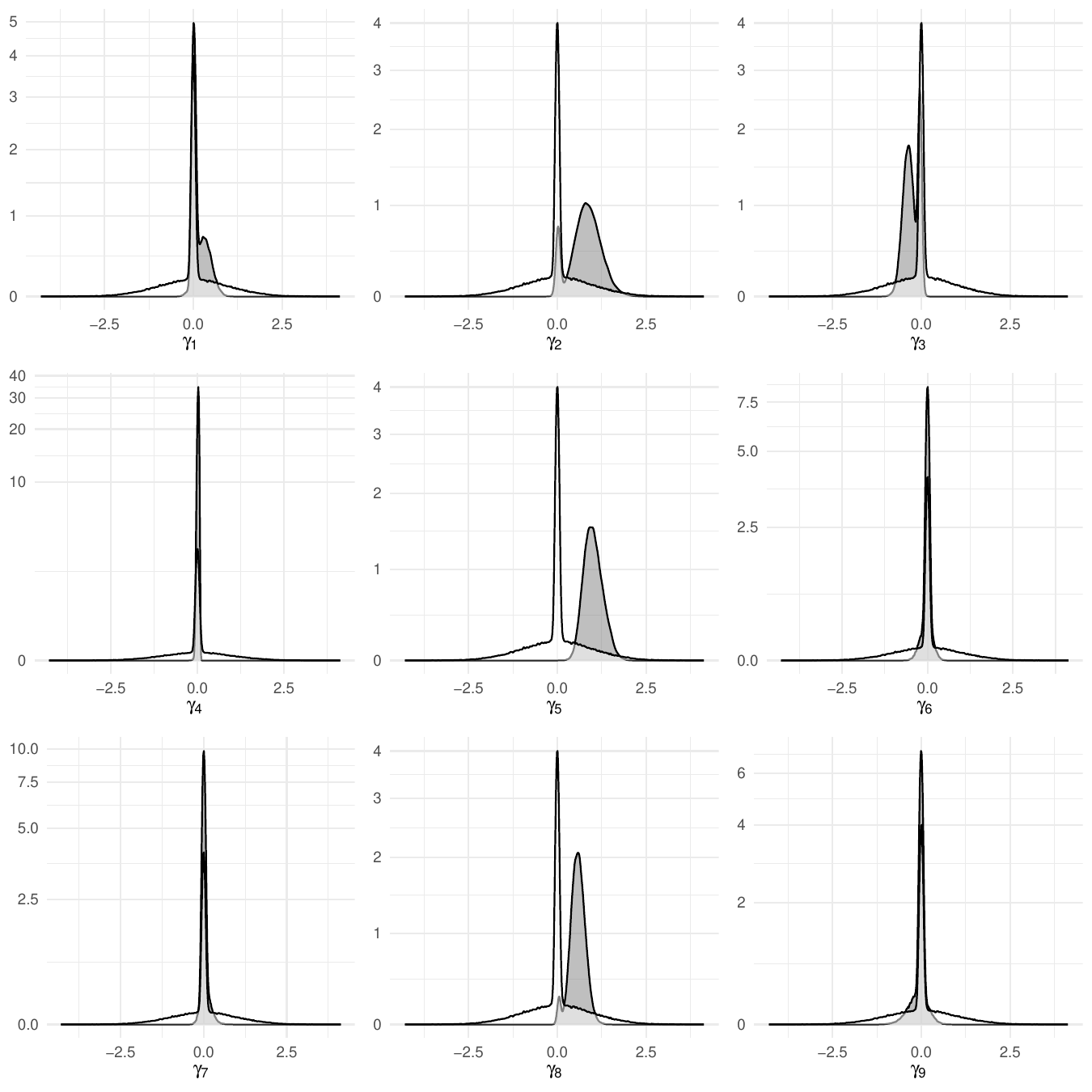}
    \caption{Prior (white) and marginal posterior  densities (grey)
        of the coefficients $\gamma_1,\dots,\gamma_9$. Densities are shown on a
        logarithmic scale to highlight the concentration of
        mass near zero, associated with the spike component.}
    \label{fig:gammas}
\end{figure}

The posterior inclusion probabilities show that leaf wetness and
brightness are included with certainty, relative humidity and air
pressure are included with probabilities $0.89$ and $0.68$
respectively, while the other variables are not selected, with
inclusion probability below the threshold of $0.5$ (see Figure
\ref{fig:pinclusion}).  This illustrates the value of the Bayesian
variable-selection framework: it yields a parsimonious decision rule
while also quantifying uncertainty associated with each inclusion
decision.

\begin{figure}[H]
    \centering
    \includegraphics[width=.6\linewidth]{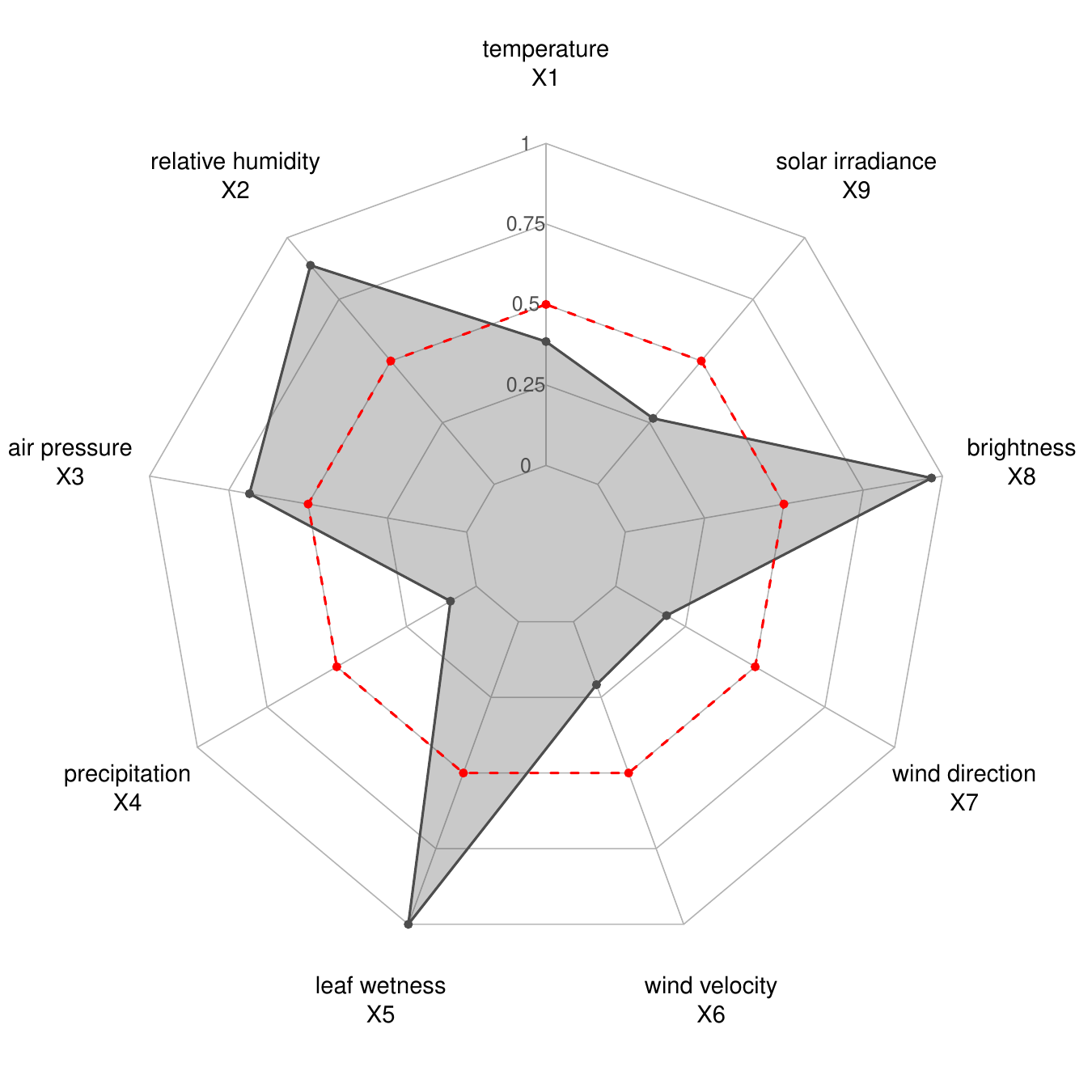}
    \caption{Radar chart of  mean marginal posterior probabilities of inclusion for
        environmental covariates. The red line indicates the inclusion
        threshold of 0.5.}

    \label{fig:pinclusion}
\end{figure}

\begin{table}[htbp]
    \centering
    \caption{Posterior summaries of excitation function parameters with MLE estimates}
    \label{tab:excitation_params}

    \renewcommand{\arraystretch}{1.2}
    \rowcolors{2}{gray!15}{gray!30}

    \begin{tabular}{lllll}

        \rowcolor{gray!60}
        \color{white} Parameter      &
        \color{white} Comments       &
        \color{white} Posterior mean &
        \color{white} 95\% CI        &
        \color{white} MLE estimate     \\

        $\rho$                       &
        branching ratio              &
        0.60                         &
        [0.47, 0.74]                 &
        0.58                           \\

        $\eta$                       &
        kernel time scale            &
        2.87                         &
        [2.12, 3.86]                 &
        2.79                           \\
    \end{tabular}
\end{table}

Table \ref{tab:excitation_params} summarizes marginal posterior
distribution of excitation function parameters, $\rho$ and $\eta$.
The branching ratio $\rho$ quantifies the average number of direct
offspring generated by a single release.  It also provides a
continuous comparison with the covariate-driven inhomogeneous Poisson
submodel: when $\rho=0$, the excitation term vanishes and model
\eqref{condIntHawkesInh} reduces to an inhomogeneous Poisson process
with intensity $\exp\{\mu+x(t)^\top\gamma\}$.  The Hawkes model can
therefore be viewed as a continuous expansion of this Poisson
submodel: rather than comparing two isolated models, the simpler model
is embedded inside a richer family and the parameter $\rho$ measures
the departure from the no-excitation case
\citep[Chapter~7]{gelman2014bda}. The posterior distribution of $\rho$
is concentrated on the interval $[0.47,0.74]$, far from the Poisson
boundary $\rho=0$, providing strong posterior evidence for a
self-exciting component in the spore-release process.  The parameter
$\eta$ quantifies the temporal scale of this effect. Its posterior
distribution is concentrated on the interval $[2.12, 3.86]$.
Consequently, the probability that a single release event triggers
subsequent events (direct offspring) more than 12 hours after its
occurrence remains low (less than $5\%$).

This continuous expansion is also useful for probabilistic
forecasting. Posterior predictive simulations are generated over the
full posterior distribution of $\rho$; hence, uncertainty about the
strength of self-excitation is propagated into the forecast
distribution. If the data supported the no-excitation case, posterior
draws of $\rho$ would concentrate near zero and the corresponding
forecasts would approach those of the covariate-driven Poisson
submodel.

Out-of-sample posterior predictive checks were performed to assess the
goodness-of-fit of the model. The model was trained on the data
recorded over the interval $(0, 2/3T)$, where $T$ denotes the
observation period. The prediction was then carried out over the
remaining interval $(2/3T, T)$.  Sample paths of the process were
generated using parameters drawn from the posterior distribution
fitted to the training data. More precisely, a random subset of $500$
posterior samples was selected, and the Hawkes process was simulated
according to Ogata’s modified thinning algorithm, adapted to the case
of a covariate-dependent baseline (see Appendix~\ref{sec:supp-ogata}).
The 95\% posterior predictive intervals encapsulate the observed data
and the median is in general close to them, except from the last phase
when the median number of events slightly underestimates the actual
number of events (see Figure~\ref{fig:outofsample}).

\begin{figure}
    \centering
    \includegraphics[width=.7\linewidth]{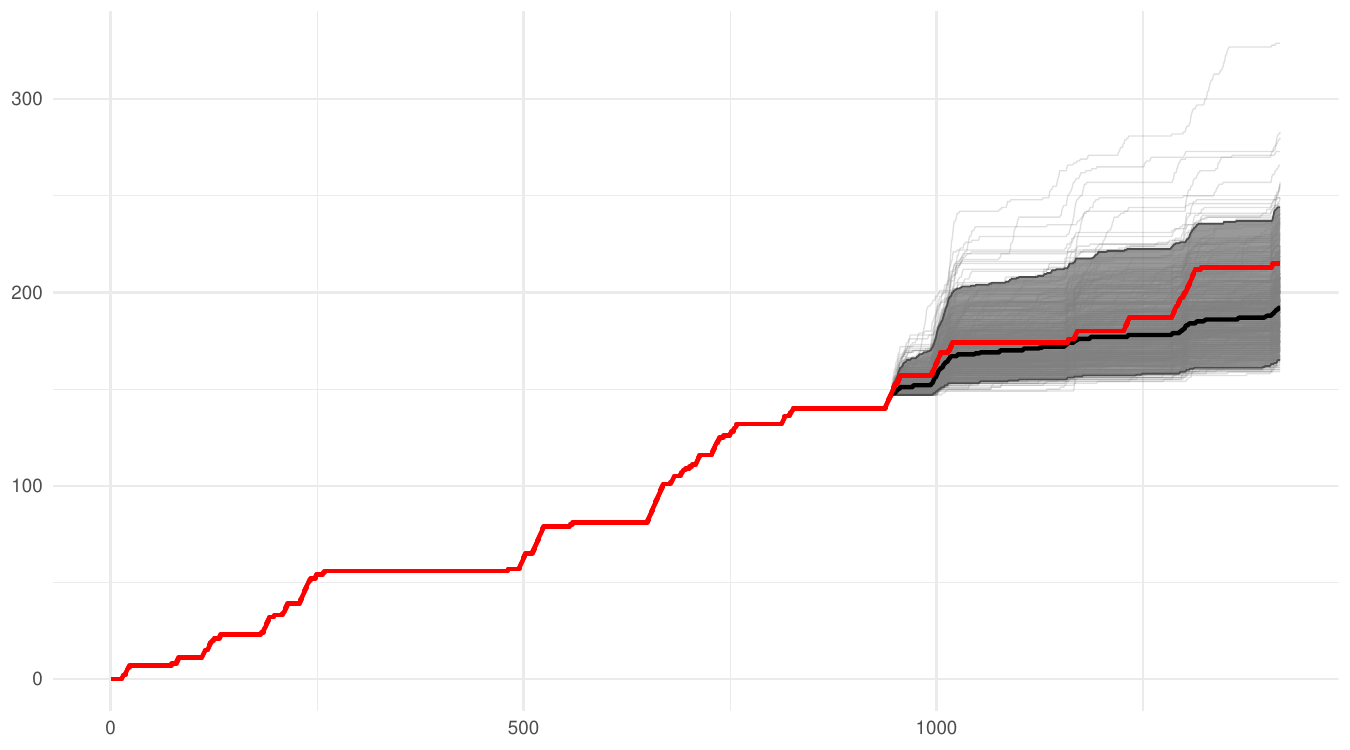}

    \caption{Out-of-sample model validation. Red line: cumulative
        counts of spore releases over the observation period $(0,T)$.
        Grey lines: 500  Hawkes paths, simulated over the interval
        $(2/3T,T)$. Grey area: 95\% posterior predictive bands.  Black
        line: pointwise median of cumulative counts across simulations.}
    \label{fig:outofsample}
\end{figure}

\subsection{Probabilistic forecasting experiment}

We evaluated the accuracy of out-of-sample prediction, considering the
data collected in 2012. The observation period was divided into
subintervals of 48 hours, and three intervals with low, medium and
high intensity of release events were selected. The high-intensity
period is associated with low air pressure and the occurrence of leaf
wetness at the beginning of the interval, whereas the medium-intensity
period is associated with these conditions approximately 15 hours
later (see Figure \ref{fig:forecast}, lower panel).  For each
scenario, the model was trained on the data from 2011 and the subset
of data from $2012$, preceding the forecast period. The simulations of
Hawkes processes with parameters sampled randomly from the posterior
distribution were then performed, and their pointwise summaries were
calculated. Figure \ref{fig:forecast} displays $95\%$ predictive bands
and median calculated from the simulations.  The model captures the
contrast between the three scenarios correctly. The $95\%$ posterior
predictive intervals encapsulate the observed values and the median of
simulated counts turns out to be a reliable indicator of the release
risk.  In this retrospective forecasting experiment, the covariate
trajectories over the forecast windows are treated as known. In an
operational setting, they would be replaced by meteorological
forecasts.

\begin{figure}[H]
    \centering
    \includegraphics[width=\linewidth]{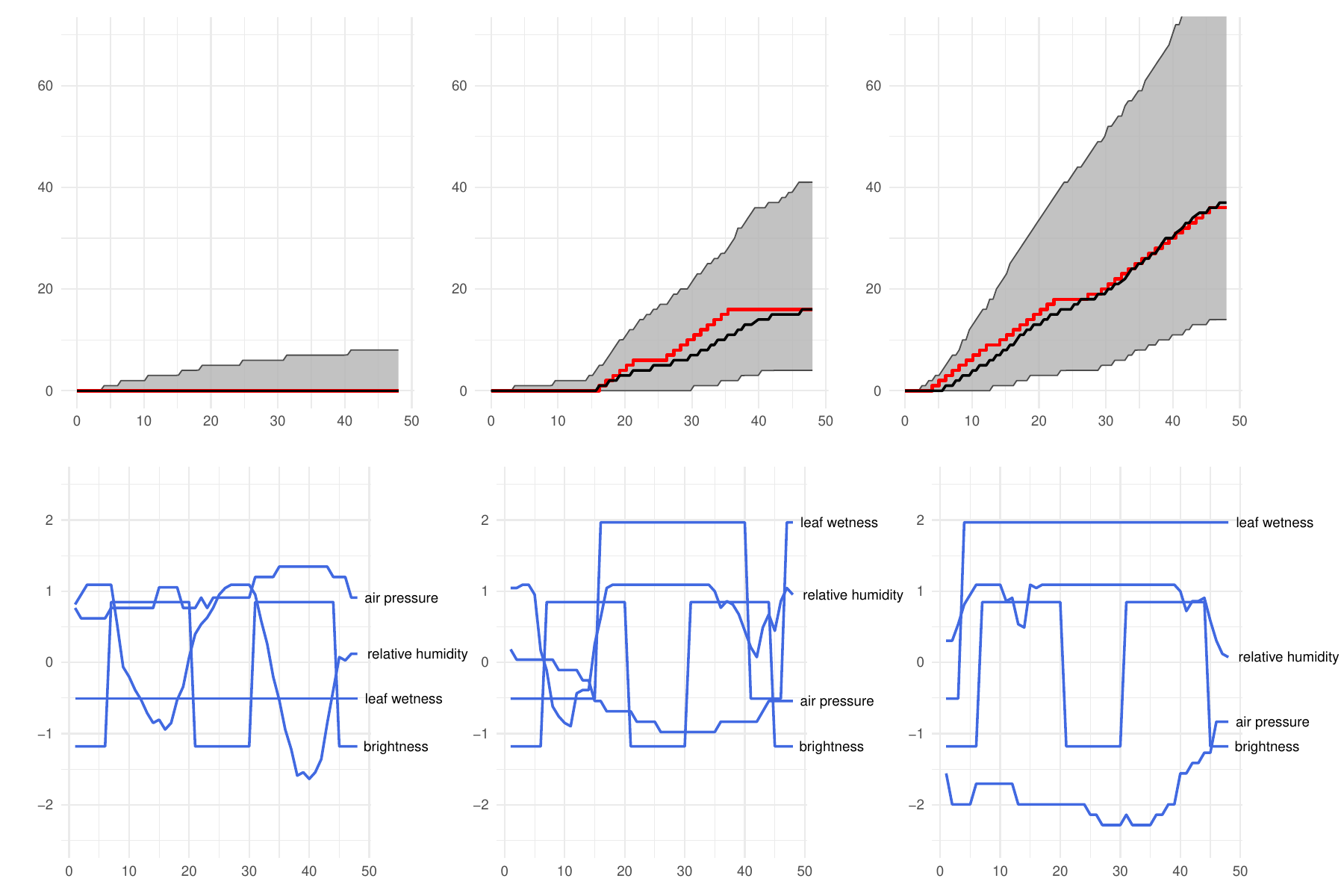}

    \caption{Forecasting experiment. First row: $95\%$ predictive bands
        (grey area) and median (black line) of release cumulative
        counts, calculated from $1000$ simulations of the Hawkes process
        with parameters sampled from the posterior distribution. The
        observed cumulative counts are displayed in red. The panels
        correspond to the scenarios of low (left), intermediate (middle)
        and high (right) risk of spore releases for the observation period
        of 48h. Second row: temporal evolution of air pressure, leaf
        wetness, brightness and relative humidity, variables selected by
        spike-and-slab approach.}
    \label{fig:forecast}
\end{figure}


\section{Conclusions}

This paper focuses on the univariate Hawkes process with a baseline
intensity that depends on a set of dynamic covariates through a
log-linear link function. We adopted a Bayesian approach with
spike-and-slab priors on the regression coefficients to perform
variable selection. This approach enables the identification of
covariates with practically meaningful effects, while retaining
posterior uncertainty through inclusion probabilities. It also
provides a natural framework for probabilistic forecasting, since
uncertainty in the regression coefficients and in the excitation
parameters can be propagated into posterior predictive simulations.

The model was applied to analyse the temporal pattern of
\textit{Venturia inaequalis} spore releases in the context of apple
scab risk assessment. The results highlight the effect of four
meteorological covariates on primary releases and demonstrate the
existence of a short-term self-exciting component. The posterior
distribution of the branching ratio was concentrated away from zero,
supporting the use of the Hawkes expansion over the covariate-driven
inhomogeneous Poisson submodel. The predictive capacity of the model
was evaluated through posterior predictive checks and a probabilistic
forecasting experiment. The model provides an indicator of release
risk, together with an assessment of prediction uncertainty, provided
that observed or forecast meteorological covariates are available.

The framework could be extended in several directions. First, although
the exponential triggering kernel is convenient and leads to efficient
likelihood evaluation through a recursive relation, the modelling
principle is not restricted to this choice. More general triggering
kernels could be considered, provided that the usual subcriticality
condition is satisfied. In that case, however, the likelihood would
generally require either a closed-form primitive of the kernel or
numerical evaluation of the excitation contribution, whereas the
exponential kernel gives a particularly efficient implementation.

Second, the piecewise-constant representation of the covariates is a
computationally convenient choice, because it yields a closed-form
expression for the baseline contribution to the compensator. If one
wished to interpolate covariate trajectories between measurement
times, the same model could be used with numerical integration of the
covariate-dependent baseline intensity. A further extension would be
to model the covariate trajectory itself as stochastic or partially
observed, rather than conditioning on a fixed realised
trajectory. This would connect the present framework to recent work on
Hawkes processes with diffusion-driven baseline intensities
\citep{perrinHawkesProcessDiffusiondriven2025}, and would allow
uncertainty in the covariate process to be propagated into inference
and forecasting.

More broadly, the proposed approach can be applied to other temporal
event data in which exogenous covariates influence the baseline rate
while past events induce short-term clustering. In such settings, the
Bayesian framework provides both interpretable variable selection and
probabilistic forecasts of future event occurrence.

\paragraph{{\bf Acknowledgements}}
This research was supported by DATAIA Convergence Institute as part of
the « Programme d’Investissement d’Avenir », (ANR-17-CONV-0003)
operated by the French National Research Institute for Agriculture, Food
and Environment.


\appendix
\numberwithin{equation}{section}
\renewcommand{\theequation}{\thesection.\arabic{equation}}
\gdef\theHsection{appendix.\arabic{section}}
\gdef\theHsubsection{appendix.\arabic{section}.\arabic{subsection}}
\gdef\theHequation{appendix.\thesection.\arabic{equation}}


\section{Computation of the likelihood function}
\label{likcomp}
The first term of the likelihood \eqref{lik} for the model \eqref{condIntHawkesInh} is given by:
\begin{align}
    \label{intensities}
    \prod_{i=1}^{n}\lambda^{\star}_{\theta}(t_i)
     & =
    \exp(\mu+x_{t_1}^{\top}\gamma)\prod_{i=2}^{n}\left(\exp(\mu+x_{t_i}^{\top}\gamma) +\rho\beta\sum_{j=1}^{i-1} \exp(-\beta(t_i-t_j))\right)\nonumber \\
     & =
    \exp(\mu+x_{t_1}^{\top}\gamma)\prod_{i=2}^{n}\left[\exp(\mu+x_{t_i}^{\top}\gamma) +\rho\beta A(i)\right],
\end{align}
where for $i\geq 2$,
\[
    A(i) = \sum_{j=1}^{i-1} \exp\{-\beta(t_i-t_j)\},
\]
with $A(1)=0$. For efficient $O(n)$ evaluation, $A(i)$ satisfies the recursion
\[
    A(i) = \exp\{-\beta(t_i-t_{i-1})\}\{1+A(i-1)\}, \qquad i=2,\dots,n.
\]
Recalling the partition $0=\tau_0 < \tau_1 < \dots < \tau_K=T$ of
$(0,T]$ into subintervals $(\tau_{k-1},\tau_k]$ of constant length
    $c=\tau_k-\tau_{k-1}$ on which $x(t)=x_{\tau_k}$, the compensator
    integral in \eqref{lik} is evaluated as follows:
\begin{align}
    \label{compensator}
    \int_0^T\lambda_{\theta}^{\star}(t)\,dt
     & =
    \int_0^T\exp(\mu+x(t)^{\top}\gamma)\,dt+\int_0^T\rho\beta\sum_{t_i<t} \exp(-\beta(t-t_i))\,dt\nonumber                                       \\
     & =
    \sum_{k=1}^K\int_{\tau_{k-1}}^{\tau_k}\exp(\mu+x(t)^{\top}\gamma)\,dt+\rho\beta\sum_{i=1}^{n}\int_{t_i}^{T} \exp(-\beta(t-t_i))\,dt\nonumber \\
     & =
    c\sum_{k=1}^K\exp(\mu+x_{\tau_k}^{\top}\gamma)-\rho\sum_{i=1}^{n}\left(\exp(-\beta(T-t_i))-1\right).
\end{align}
Subtracting the compensator \eqref{compensator} from the logarithm of
\eqref{intensities} yields the log-likelihood in \eqref{likHawkes}.


\section{Simulation of the covariate-driven Hawkes process}
\label{sec:supp-simulation}

Let $t_0<T$ denote the beginning and end of the simulation window, and let
\[
    \mathcal H_0=\{t_i:t_i< t_0\}
\]
be the observed event history available at $t_0$.  The history may be
empty, in which case the algorithms provide unconditional forward
simulation from an empty initial condition.  Conditional prediction is
useful for forecasting based on observed past events and covariate
trajectories (as in Section~\ref{scab}).  Conditional on a realised
covariate trajectory $x(t)$ and fixed parameter values, the
conditional intensity on $(t_0,T]$ is
\begin{equation}
    \label{eq:supp-intensity}
    \lambda^\star(t)
    =
    \lambda_0(t)
    +
    E(t),
    \qquad
    \lambda_0(t)=g\{\mu+x(t)^\top\gamma\},
\end{equation}
where $\beta>0$, $0\leq\rho<1$, and
\begin{equation}
    \label{eq:supp-excitation}
    E(t) = \rho\beta\sum_{t_i<t}\exp\{-\beta(t-t_i)\}
\end{equation}
denotes the excitation state (the self-exciting component) at time
$t$.  In particular, at the forecast origin $t_0$, the excitation
generated by the observed history $\mathcal H_0$ is given by
\begin{equation}
    \label{eq:supp-initial-excitation}
    E_0 = E(t_0) = \rho\beta\sum_{t_i\in\mathcal H_0}\exp\{-\beta(t_0-t_i)\}.
\end{equation}
We assume throughout that $\lambda_0$ is
measurable and bounded on $(t_0,T]$, and choose $M>0$ such that
\[
    \sup_{t\in(t_0,T]}\lambda_0(t)\leq M.
\]

The algorithms below are adaptations of two standard constructions for
linear Hawkes processes.  The first applies the modified thinning
method of \citet{ogataLewisSimulationMethod1981}, which extends the
Poisson thinning method of
\citet{lewisSimulationNonhomogeneousPoisson1979} to point processes
specified by predictable conditional intensities.  The second uses the
immigration--birth, or Poisson cluster representation of
\citet{Hawkes2}.  These two constructions are standard in the Hawkes
process literature; see also \citet{daley}, \citet{bernabeu}, and
\citet{laub}.  Our purpose is to state their adaptation to the
covariate-dependent baseline in \eqref{eq:supp-intensity} and to allow
simulation conditional on a possibly empty history.

\subsection{Modified thinning conditional on the observed history}
\label{sec:supp-ogata}

The simulation method of \citet{ogataLewisSimulationMethod1981} extends the
thinning construction of \citet{lewisSimulationNonhomogeneousPoisson1979} to
point processes specified by predictable conditional intensities. The following
theorem states the modified-thinning principle in the univariate form needed here.
It is a direct specialisation of the predictable dominating-process construction
and Proposition~I of \citet{ogataLewisSimulationMethod1981}.

\begin{theorem}[Modified thinning with a predictable dominating intensity]
    \label{thm:supp-modified-thinning}
    Let \(\lambda^\star\) be a non-negative predictable intensity process on
    \((t_0,T]\). Suppose that one can construct a finite, positive, predictable,
    piecewise-constant process \(B=\{B(t):t\in(t_0,T]\}\), possibly depending on
    the previously accepted event history, such that
    \[
        \lambda^\star(t)\leq B(t), \qquad t\in(t_0,T],
    \]
    almost surely. Conditional on the history, generate a point process of
    candidates with predictable intensity \(B(t)\). At each candidate time \(s\),
    conditionally on the information available immediately before \(s\),
    independently draw \(U_s\sim\mathcal U(0,1)\) and retain the candidate if
    \[
        U_s\leq \frac{\lambda^\star(s)}{B(s)},
    \]
    where both \(\lambda^\star(s)\) and \(B(s)\) are evaluated from the predictable
    state immediately before the acceptance decision. Then the retained candidate
    times form a point process on \((t_0,T]\) with conditional intensity
    \(\lambda^\star\).
\end{theorem}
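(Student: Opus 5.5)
The plan is to verify the defining property of the conditional intensity directly through the compensator: a point process has conditional intensity \(\lambda^\star\) with respect to a filtration exactly when \(N(t)-\int_{t_0}^t\lambda^\star(u)\,du\) is a local martingale, or equivalently when the infinitesimal characterisation \eqref{condInt} holds. I will work with the enlarged filtration \(\mathcal G_t\) generated by the observed history \(\mathcal H_0\), the candidate process \(N_B\), and the marks \(U_s\) attached to candidates up to time \(t\). I then show that the retained process \(N\) has \(\mathcal G\)-intensity \(\lambda^\star\). Since \(\lambda^\star\) is adapted to the smaller filtration generated by \(\mathcal H_0\) and the retained points, the projection argument that Ogata's Proposition~I formalises transfers the intensity to the natural filtration.

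The first step is to write the retained process as a thinning integral:
\[
N(t)=\int_{(t_0,t]}\mathbf 1\{U_s\le \lambda^\star(s)/B(s)\}\,N_B(ds).
\]
This equals the integral of the predictable integrand \(\lambda^\star(s)/B(s)\in[0,1]\) against \(N_B(ds)\), plus a martingale term coming from the independent marks.

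Concretely, mark each candidate with \(U_s\) and regard \((N_B,U)\) as a marked point process on \((t_0,T]\times[0,1]\). By the independence assumption its \(\mathcal G\)-compensator is \(B(s)\,ds\,du\).

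Because \(B\) is predictable, the map \((s,u)\mapsto \mathbf 1\{u\le \lambda^\star(s)/B(s)\}\) is \(\mathcal P\otimes\mathcal B([0,1])\)-measurable. Integrating it against the compensator gives
\[
\int_{t_0}^t B(s)\cdot\frac{\lambda^\star(s)}{B(s)}\,ds=\int_{t_0}^t\lambda^\star(s)\,ds.
\]
This uses \(\lambda^\star\le B\) to guarantee the ratio lies in \([0,1]\), and uses \(B>0\).

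The martingale property of the integral of a predictable integrand against a compensated marked point process then yields that \(N(t)-\int_{t_0}^t\lambda^\star\) is a local martingale. Localisation is needed only because \(B\) is finite but possibly unbounded over the whole path, so I will stop at the \(n\)-th candidate. The finiteness of \(B\), together with piecewise constancy, ensures that no explosion occurs on \((t_0,T]\), so the localising sequence reaches \(T\).

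The main obstacle is the self-referential construction: \(B\) and \(\lambda^\star\) both depend on the accepted points, which are themselves produced by the thinning. Predictability therefore has to be checked carefully. I plan to handle this by constructing everything inductively between consecutive candidate times. On each inter-candidate interval, \(B\) is determined by \(\mathcal G_{s_k}\) and is piecewise constant, so the next candidate is generated as exponential waiting times against the current level. This gives a well-defined process whose intensity is \(B\) relative to \(\mathcal G\), with \(\lambda^\star(s)\) evaluated from the left limit of the state, exactly as the statement stipulates.

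Finally, uniqueness of the law given the intensity (\citep{daley}) identifies the retained process with the target process.
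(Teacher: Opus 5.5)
Your argument is correct and is essentially the paper's own. The paper's proof is a one-line appeal to Ogata's Proposition~I, with acceptance/rejection treated as a binary mark on the dominating process. Your computation of the compensator $B(s)\,ds\,du$ of the uniformly marked candidate process in the enlarged filtration, followed by restriction to the natural filtration of the retained points (where $\lambda^\star$ is predictable), is exactly the reasoning that proposition packages. One detail to tighten: the $n$-th candidate is not a stopping time of the smaller filtration, so localise there with the minimum of the $n$-th retained point and the first time $\int_{t_0}^t\lambda^\star(u)\,du\geq n$. Alternatively, avoid localisation entirely via the characterisation $\mathbb E\int C\,dN=\mathbb E\int C\lambda^\star\,du$ for non-negative predictable $C$.
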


\begin{proof}
    This is the univariate specialisation of the predictable dominating-process
    construction and Proposition~I of \citet{ogataLewisSimulationMethod1981}, with
    mark \(1\) denoting acceptance and mark \(0\) denoting rejection.
\end{proof}

In practice, on any interval over which the dominating process is constant,
say \(B(u)\equiv B\), the next candidate waiting time is exponentially
distributed with rate \(B\).

Algorithm~\ref{alg:supp-ogata} applies this principle to the covariate-driven
Hawkes process with exponential excitation.

\begin{algorithm}[H]
    \linespread{1.15}\selectfont
    \caption{Conditional simulation by modified thinning}
    \label{alg:supp-ogata}
    \KwIn{$t_0<T$, history $\mathcal H_0$, parameters $(\mu,\gamma,\rho,\beta)$,
        covariate trajectory $x(t)$, link function $g$, and $M>0$ satisfying
        $\sup_{t\in(t_0,T]}\lambda_0(t)\leq M$}
    \KwOut{Future events $\mathcal E$ on $(t_0,T]$}

    $\lambda_0(t)\leftarrow g\{\mu+x(t)^\top\gamma\}$\;
    $\mathcal E\leftarrow\varnothing$, $t\leftarrow t_0$\;
    $E\leftarrow\rho\beta\sum_{t_i\in\mathcal H_0}
        \exp\{-\beta(t_0-t_i)\}$ \tcp*[r]{initial excitation state $E_0=E(t_0)$}\;

    \While{$t<T$}{
    $B\leftarrow M+E$ \tcp*[r]{local dominating rate}
    Draw $u\sim\mathcal U(0,1)$ and set
    $s\leftarrow t-B^{-1}\log u$\;
    \If{$s\geq T$}{
        \textbf{break}\;
    }
    $E_s\leftarrow E\exp\{-\beta(s-t)\}$\;
    $\lambda_s\leftarrow\lambda_0(s)+E_s$\;
    Draw $v\sim\mathcal U(0,1)$\;
    \eIf{$v\leq\lambda_s/B$}{
        $\mathcal E\leftarrow\mathcal E\cup\{s\}$\;
        $E\leftarrow E_s+\rho\beta$ \tcp*[r]{accepted event creates a jump}
    }{
        $E\leftarrow E_s$ \tcp*[r]{rejected proposal creates no jump}
    }
    $t\leftarrow s$\;
    }
    \Return{$\mathcal E$}\;
\end{algorithm}

The correctness of Algorithm~\ref{alg:supp-ogata} relies on tracking the
scalar excitation state and maintaining a valid predictable bound on each
inter-candidate interval.

\begin{lemma}[Excitation recursion and local domination]
    \label{lem:supp-ogata-state}
    At every update time \(t\) of Algorithm~\ref{alg:supp-ogata}, the state variable satisfies
    \[
        E = \rho\beta \sum_{t_i\in\mathcal H_0\cup\mathcal E:\,t_i\leq t} \exp\{-\beta(t-t_i)\}.
    \]
    If \(s>t\) is the next candidate time, then
    \[
        E(s^-)=E\exp\{-\beta(s-t)\},
    \]
    and the assignments in Algorithm~\ref{alg:supp-ogata} preserve this identity after acceptance or rejection.
    Moreover,
    \[
        \lambda^\star(u)\leq M+E, \qquad u\in[t,s].
    \]
\end{lemma}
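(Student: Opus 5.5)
The plan is an induction over the successive update times of Algorithm~\ref{alg:supp-ogata}, i.e.\ the initial time $t_0$ followed by the candidate times $s$ in the order they are generated. The invariant to carry is
\[
E=\rho\beta\sum_{t_i\in\mathcal H_0\cup\mathcal E:\,t_i\leq t}\exp\{-\beta(t-t_i)\}.
\]
Here $\mathcal E$ denotes the set of accepted events at the current step.

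\textbf{Base case.} At $t=t_0$ the set $\mathcal E$ is empty. Every $t_i\in\mathcal H_0$ satisfies $t_i<t_0$, so the invariant reduces to the initial assignment $E\leftarrow E_0$ of \eqref{eq:supp-initial-excitation}.

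\textbf{Decay between updates.} Suppose the invariant holds at an update time $t$, and let $s>t$ be the next candidate. By construction no event is accepted in $(t,s)$. Hence for $u\in(t,s)$ the events with $t_i<u$ are exactly those with $t_i\leq t$. Since $\exp\{-\beta(u-t_i)\}=\exp\{-\beta(u-t)\}\exp\{-\beta(t-t_i)\}$, the definition \eqref{eq:supp-excitation} gives
\[
E(u)=E\exp\{-\beta(u-t)\}.
\]
Letting $u\uparrow s$ yields $E(s^-)=E\exp\{-\beta(s-t)\}$, which is the quantity $E_s$ computed by the algorithm.

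\textbf{Preservation at the candidate time.} There are two cases.
\begin{itemize}
\item If $s$ is rejected, the event set is unchanged. The invariant at the new time $s$ is then the sum above evaluated at $s$, which equals $E_s$. This matches the assignment $E\leftarrow E_s$.
\item If $s$ is accepted, the sum gains the single term $\rho\beta\exp\{-\beta(s-s)\}=\rho\beta$. The invariant is then $E_s+\rho\beta$, which matches the assignment $E\leftarrow E_s+\rho\beta$.
\end{itemize}
The update $t\leftarrow s$ closes the induction. If the loop instead breaks because $s\geq T$, nothing further needs to be checked.

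\textbf{Local domination.} For $u\in(t,s]$, the decay step gives $E(u)=E\exp\{-\beta(u-t)\}\leq E$. This uses $\beta>0$ and $E\geq 0$, where $E\geq0$ holds because $\rho\geq0$ and every term in the sum is nonnegative. Together with $\lambda_0(u)\leq M$ on $(t_0,T]$, this gives
\[
\lambda^\star(u)=\lambda_0(u)+E(u)\leq M+E=B.
\]

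The only point needing care is the endpoint $u=t$. We interpret $\lambda^\star$ as the predictable (left-continuous) intensity, so $E(t)$ means $E(t^-)$, the excitation from events strictly before $t$. If $t$ is itself an accepted event, then $E(t^-)=E-\rho\beta\leq E$. Otherwise $E(t^-)=E$. In both cases the bound $\lambda^\star(t)\leq M+E$ holds. The same reading applies at $u=s$, which is covered by $E(s^-)\leq E$.

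The main obstacle is bookkeeping, not analysis. One must keep straight the distinction between $E(s^-)$, which enters the acceptance ratio, and the post-jump state $E$, which the algorithm stores. This distinction is what makes $B$ predictable and what makes the ratio $\lambda_s/B\leq1$ valid for Theorem~\ref{thm:supp-modified-thinning}.
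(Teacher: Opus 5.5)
Your proof is correct and follows the same route as the paper: induction from the initial state $E_0$, the factorisation $\exp\{-\beta(s-t_i)\}=\exp\{-\beta(s-t)\}\exp\{-\beta(t-t_i)\}$ giving $E(s^-)=E_s$, the reject/accept case split matching the two assignments, and domination from $E\exp\{-\beta(u-t)\}\le E$ together with $\lambda_0\le M$. Two points the paper's proof leaves implicit are made explicit in yours: the nonnegativity $E\ge 0$, and the left-continuous reading of $E(u)$ at the endpoint $u=t$, where $E(t^-)=E-\rho\beta$ if $t$ was just accepted.
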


\begin{proof}
    At initialisation \(t=t_0\), \(\mathcal E=\varnothing\), so \(E\) matches the excitation
    \(E_0=E(t_0)\) generated by \(\mathcal H_0\) (with \(E_0=0\) if \(\mathcal H_0=\varnothing\)).
    For a candidate proposal \(s>t\), factoring the exponential decay as
    \(\exp\{-\beta(s-t_i)\}=\exp\{-\beta(s-t)\}\exp\{-\beta(t-t_i)\}\) yields
    \begin{align*}
        E(s^-)
         & =
        \rho\beta\sum_{t_i\leq t}\exp\{-\beta(s-t_i)\} \\
         & =
        \exp\{-\beta(s-t)\}\left[\rho\beta\sum_{t_i\leq t}\exp\{-\beta(t-t_i)\}\right]
        =
        E\exp\{-\beta(s-t)\}
        =
        E_s.
    \end{align*}
    If \(s\) is rejected, no event occurs and \(E(s)=E(s^-)=E_s\), matching \(E\leftarrow E_s\).
    If \(s\) is accepted, the new event adds an instantaneous jump of size \(\phi(0)=\rho\beta\),
    so \(E(s)=E_s+\rho\beta\), matching \(E\leftarrow E_s+\rho\beta\).
    By induction, the state variable \(E\) tracks the true excitation without approximation.
    Finally, since \(\beta>0\) and \(u\geq t\), \(\exp\{-\beta(u-t)\}\leq 1\), so
    \(E(u)=E\exp\{-\beta(u-t)\}\leq E\) on \([t,s]\). Combined with \(\lambda_0(u)\leq M\),
    we obtain \(\lambda^\star(u)=\lambda_0(u)+E(u)\leq M+E=B\) on \([t,s]\), and the proposal
    is accepted with probability \(\lambda_s/B = \lambda^\star(s)/B\).
\end{proof}

\begin{proposition}[Validity of Algorithm~\ref{alg:supp-ogata}]
    \label{prop:supp-ogata}
    Assume that \(\lambda_0\) is measurable and bounded on \((t_0,T]\), and let
    \(M>0\) satisfy \(\lambda_0(u)\leq M\) for all \(u\in(t_0,T]\). Then,
    conditional on the observed history \(\mathcal H_0\) and covariate trajectory
    \(x\), Algorithm~\ref{alg:supp-ogata} generates a point process on \((t_0,T]\)
    with conditional intensity
    \[
        \lambda^\star(u)
        =
        \lambda_0(u)
        +
        \rho\beta\sum_{t_i<u}\exp\{-\beta(u-t_i)\},
    \]
    and terminates after finitely many proposals almost surely.
\end{proposition}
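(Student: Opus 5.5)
The proof has two parts: validity, obtained from Theorem~\ref{thm:supp-modified-thinning} with the bounds of Lemma~\ref{lem:supp-ogata-state}, and almost-sure termination, obtained by domination of the candidate process.

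\textbf{Validity.} Define the dominating process $B(u)=M+E(t_k)$ for $u\in(t_k,t_{k+1}]$, where $t_0<t_1<\cdots$ are the successive candidate times generated by Algorithm~\ref{alg:supp-ogata} and $E(t_k)$ is the state variable after the update at $t_k$. We check the hypotheses of Theorem~\ref{thm:supp-modified-thinning} in turn.
\begin{itemize}
\item $B$ is piecewise constant and positive, since $M>0$ and $E\ge 0$.
\item $B$ is finite, because $E(t_k)$ is a finite sum.
\item $B$ is predictable: on $(t_k,t_{k+1}]$ it depends only on the history up to $t_k$.
\end{itemize}
Next, the candidate mechanism has intensity $B$. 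Given the state at $t_k$, the waiting time $-B^{-1}\log u$ is $\mathrm{Exp}(B)$ by inversion, so candidates arrive at rate $B$ on each inter-candidate interval. By memorylessness, restarting the exponential clock at each candidate, whether accepted or rejected, gives a point process of candidates with predictable intensity $B(t)$.

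Lemma~\ref{lem:supp-ogata-state} then supplies the remaining hypotheses. It gives $\lambda^\star(u)\le B(u)$ on each $[t_k,t_{k+1}]$, hence on all of $(t_0,T]$. It also shows that the acceptance ratio $\lambda_s/B$ in the algorithm equals $\lambda^\star(s^-)/B(s)$, where $\lambda^\star(s^-)$ is built from the exact excitation $E(s^-)$ and the history of retained events. The uniforms $v$ are drawn independently of everything before $s$. With these facts, Theorem~\ref{thm:supp-modified-thinning} implies that the retained points have conditional intensity $\lambda_0(u)+\rho\beta\sum_{t_i<u}e^{-\beta(u-t_i)}$. Here the sum runs over $\mathcal H_0$ and the accepted points, which is the stated intensity.

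\textbf{Termination.} The number of proposals equals the number of candidates in $(t_0,T)$ plus one final overshoot. I would bound this number in two steps.
\begin{itemize}
\item \emph{Bounding the dominating rate.} Each accepted event raises $E$ by $\rho\beta$, and between events $E$ only decays. Hence $B(u)\le M+E_0+\rho\beta\,N_{\mathrm{acc}}(u^-)$, where $N_{\mathrm{acc}}$ counts accepted events.
\item \emph{Bounding the accepted events.} I would show that $N_{\mathrm{acc}}$ is a.s.\ finite on $(t_0,T]$ by comparison with a linear pure-birth (Yule-type) process. Its rate $M+E_0+\rho\beta n$ is affine in the current count $n$, and such a process is non-explosive because $\sum_n 1/(M+E_0+\rho\beta n)=\infty$.
\end{itemize}
Given finitely many accepted events, $B$ is bounded on $(t_0,T]$ by a finite random constant $\bar B$. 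The candidate count in $(t_0,T)$ is then stochastically dominated by a Poisson$(\bar B(T-t_0))$ variable, conditionally on $\bar B$, and so is a.s.\ finite. For a simpler comparison, $N_{\mathrm{acc}}$ can also be dominated directly by the counting process of candidates with rate $M+E_0+\rho\beta n$ that ignores rejections.

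The main obstacle is the non-explosion step. The dominating rate grows with the accepted count, so a naive Poisson bound does not apply. I would handle it by a coupling argument. Under the thinning construction, the total candidate process itself has intensity $M+E_0+\rho\beta N_{\mathrm{acc}}\le M+E_0+\rho\beta N_{\mathrm{cand}}$, so $N_{\mathrm{cand}}$ is dominated by a linear birth process started from zero with birth rates $M+E_0+\rho\beta n$. Such a process has finite expectation on bounded intervals, of order $e^{\rho\beta(T-t_0)}$, which gives a.s.\ termination.
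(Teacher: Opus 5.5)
Your validity argument is the paper's: Lemma~\ref{lem:supp-ogata-state} gives a predictable, piecewise-constant dominating rate and the acceptance ratio $\lambda^\star(s)/B(s)$, and Theorem~\ref{thm:supp-modified-thinning} finishes the job. You simply check the hypotheses (positivity, finiteness, exponential waiting times by inversion and memorylessness) more explicitly.

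Termination is where you genuinely differ. The paper uses the subcritical immigration--birth representation to say that, for $\rho<1$, the output $\mathcal E$ is a.s.\ finite, and then bounds $B\le M+E_0+\rho\beta|\mathcal E|$. You instead dominate the candidate stream directly. After the $k$-th update the state satisfies $E\le E_0+\rho\beta k$. The algorithm generates the next gap as $W_k/B_k$, with $W_k=-\log u_k\sim\mathrm{Exp}(1)$ and $B_k=M+E$. So each gap is pathwise at least $W_k/(M+E_0+\rho\beta k)$, and candidate times come no earlier than the jump times of a linear pure-birth process. That process is non-explosive because $\sum_k(M+E_0+\rho\beta k)^{-1}=\infty$.

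Your route is self-contained, needs no branching structure, and never uses $\rho<1$, so it shows the algorithm halts for every $\rho\ge 0$. It also avoids first identifying the output as a Hawkes process before ruling out accumulation of candidates. The paper's argument implicitly relies on that identification, while Theorem~\ref{thm:supp-modified-thinning} itself presupposes a well-defined candidate process on $(t_0,T]$. The paper's route is shorter and ties termination to the model's interpretation, but it leans on an external structural result.

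Some tidy-ups:
\begin{itemize}
\item Write the coupling out explicitly as above. Your last paragraph alone suffices, so the separate finiteness argument for $N_{\mathrm{acc}}$ is unnecessary.
\item Drop the step ``conditionally on $\bar B$, Poisson$(\bar B(T-t_0))$'', since $\bar B$ is built from the candidates themselves.
\item The candidate intensity is bounded by, not equal to, $M+E_0+\rho\beta N_{\mathrm{acc}}$.
\end{itemize}
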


\begin{proof}
    By Lemma~\ref{lem:supp-ogata-state}, the process \(B\) is predictable, piecewise
    constant, and dominates \(\lambda^\star\) on each inter-candidate interval. At each
    candidate time \(s\), Algorithm~\ref{alg:supp-ogata} accepts with probability
    \[
        \frac{\lambda_s}{B} = \frac{\lambda^\star(s)}{B(s)}.
    \]
    Theorem~\ref{thm:supp-modified-thinning} therefore shows that the accepted points
    \(\mathcal E\) form a point process on \((t_0,T]\) with conditional intensity
    \(\lambda^\star\).

    To establish almost-sure termination, note that by the subcritical
    immigration--birth representation of \citet{Hawkes2}, the total number of events
    on the finite forecast window is almost surely finite when \(0\leq\rho<1\)
    (\(|\mathcal E|<\infty\)). Because excitation increases only at accepted events by
    \(\rho\beta\) and decays otherwise, the dominating rate is bounded along any
    sample path by
    \[
        \sup_{u\in(t_0,T]} B(u) \leq M+E_0+\rho\beta|\mathcal E| < \infty.
    \]
    Consequently, the total number of candidate proposals drawn from the dominating
    Poisson process on \((t_0,T]\) is almost surely finite, guaranteeing that
    Algorithm~\ref{alg:supp-ogata} terminates after finitely many proposals almost surely.
\end{proof}

\subsection{Conditional simulation from the cluster representation}
\label{sec:supp-cluster}

The immigration--birth representation of the linear Hawkes process
\citep{Hawkes2} separates immigrant events, generated by the baseline
intensity \(\lambda_0\), from their descendants.  Conditional on the
immigrant process, each event independently generates a Poisson
process of direct offspring, and every offspring becomes the parent of
a new generation.

For the exponential kernel, the direct offspring of a parent occurring at time \(t_p\) form a Poisson process with intensity
\[
    \phi(t)=\rho\beta e^{-\beta (t-t_p)},\qquad t>t_p.
\]

To simulate only offspring whose occurrence times fall in the forecast window, define
\[
    a(t_p)=\max(0,t_0-t_p), \qquad b(t_p)=T-t_p.
\]

This is because if $t_p$ is before $t_0$, we make a  prediction in between $(t_0, T]$, else in $(t_p, T]$.
The mean number of direct offspring of $t_p$ whose occurrence times lie in
$(t_0,T]$ is, by integration of the offspring intensity over this time interval,
\begin{equation}
    \label{eq:supp-truncated-mean}
    m(t_p)
    =
    \rho\{e^{-\beta a(t_p)}-e^{-\beta b(t_p)}\},
\end{equation}
provided $b(t_p)>a(t_p)$.  Conditional on such an offspring being in the
window, its lag $D$ has the exponential distribution truncated to
$(a(t_p),b(t_p)]$.
It can be generated via the inverse-transform method from $u\sim\mathcal U(0,1)$ by
\begin{equation}
    \label{eq:supp-truncated-lag}
    D
    =
    -\frac{1}{\beta}
    \log\left[
    e^{-\beta a(t_p)}
    -u\{e^{-\beta a(t_p)}-e^{-\beta b(t_p)}\}
    \right].
\end{equation}

\begin{algorithm}[H]
    \linespread{1.15}\selectfont
    \caption{Conditional simulation by the cluster representation}
    \label{alg:supp-cluster}
    \KwIn{$t_0<T$, history $\mathcal H_0$, parameters $(\mu,\gamma,\rho,\beta)$,
        covariate trajectory $x(t)$, link function $g$, and $M>0$ satisfying
        $\sup_{t\in(t_0,T]}\lambda_0(t)\leq M$}
    \KwOut{Future events $\mathcal E$ on $(t_0,T]$}

    $\lambda_0(t)\leftarrow g\{\mu+x(t)^\top\gamma\}$\;

    \textbf{Step 1: simulate future immigrants by thinning}\;
    $t\leftarrow t_0$, $\mathcal I\leftarrow\varnothing$\;
    \While{$t<T$}{
    Draw $u_1\sim\mathcal U(0,1)$ and set
    $t\leftarrow t-M^{-1}\log u_1$\;
    \If{$t\geq T$}{
        \textbf{break}\;
    }
    Draw $u_2\sim\mathcal U(0,1)$\;
    \If{$u_2\leq\lambda_0(t)/M$}{
        $\mathcal I\leftarrow\mathcal I\cup\{t\}$\;
    }
    }

    \textbf{Step 2: generate descendants in the forecast window}\;
    $\mathcal E\leftarrow\mathcal I$\;
    $\mathcal P\leftarrow\operatorname{queue}(\mathcal H_0\cup\mathcal I)$\;
    \While{$\mathcal P$ is not empty}{
    Remove the first parent time $t_p$ from $\mathcal P$\;
    $a\leftarrow\max(0,t_0-t_p)$, $b\leftarrow T-t_p$\;
    \If{$b>a$}{
    $m\leftarrow\rho\{e^{-\beta a}-e^{-\beta b}\}$\;
    Draw $N\sim\operatorname{Poisson}(m)$\;
    \For{$j=1,\ldots,N$}{
    Draw $u\sim\mathcal U(0,1)$\;
    $D\leftarrow-\beta^{-1}\log[e^{-\beta a}
        -u\{e^{-\beta a}-e^{-\beta b}\}]$\;
    $t_e\leftarrow t_p+D$\;
    $\mathcal E\leftarrow\mathcal E\cup\{t_e\}$\;
    Append $t_e$ to $\mathcal P$ \tcp*[r]{offspring becomes a parent}
    }
    }
    }
    \Return{$\operatorname{sort}(\mathcal E)$}\;
\end{algorithm}

\begin{proposition}[Validity of Algorithm~\ref{alg:supp-cluster}]
    \label{prop:supp-cluster}
    Under the assumptions stated above, Algorithm~\ref{alg:supp-cluster} generates,
    conditional on $(\mathcal H_0,x)$, a point process on $(t_0,T]$ with
    conditional intensity \eqref{eq:supp-intensity}.
\end{proposition}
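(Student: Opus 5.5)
The plan is to identify the output of Algorithm~\ref{alg:supp-cluster} with the restriction to $(t_0,T]$ of the Poisson cluster (immigration--birth) process of \citet{Hawkes2}. This process is seeded by the fixed history $\mathcal H_0$ together with a Poisson process of immigrants of intensity $\lambda_0$ on $(t_0,T]$. Once that identification is made, I would invoke the standard equivalence between the cluster representation and the conditional-intensity definition of the linear Hawkes process. That equivalence, suitably conditioned on $\mathcal H_0$, gives intensity \eqref{eq:supp-intensity}.

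\textbf{Step 1 (immigrants).} Step~1 of the algorithm is classical Lewis--Shedler thinning of a homogeneous Poisson process of rate $M$, with acceptance probability $\lambda_0(t)/M\le 1$. I would obtain it as the special case of Theorem~\ref{thm:supp-modified-thinning} with $B\equiv M$ and deterministic intensity $\lambda_0$. Hence $\mathcal I$ is a Poisson process with intensity $\lambda_0$ on $(t_0,T]$, independent of $\mathcal H_0$.

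\textbf{Step 2 (descendants).} Consider a parent at $t_p$. In the untruncated cluster model its direct offspring form a Poisson process on $(t_p,\infty)$ with intensity $\phi(t-t_p)$. By the restriction property of Poisson processes, the offspring falling in $(t_0,T]$ form a Poisson process on $(t_p+a,t_p+b]$. The number of such offspring is Poisson with mean $\int_a^b\rho\beta e^{-\beta u}\,du=m(t_p)$, as in \eqref{eq:supp-truncated-mean}. Given that number, the lags are i.i.d.\ with the truncated exponential density, which \eqref{eq:supp-truncated-lag} samples by inverse transform. For parents in $\mathcal H_0$ this amounts to conditioning on the observed past. Two facts justify discarding the offspring outside the window. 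Offspring before $t_0$ of historical parents are already accounted for in $\mathcal H_0$. Offspring after $T$, and all of their descendants, lie after $T$ by causality, so they never affect the window. Because each event in the window reproduces independently given its time, the queue-based breadth-first generation reproduces the branching structure restricted to $(t_0,T]$. Termination follows as in Proposition~\ref{prop:supp-ogata}: each cluster is a subcritical Galton--Watson tree with mean offspring at most $\rho<1$, so it is almost surely finite, and the set of roots $\mathcal H_0\cup\mathcal I$ is almost surely finite.

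\textbf{Step 3 (intensity).} The aim is to show that the resulting superposition $\mathcal E$ has $\mathcal F_{t^-}$-intensity $\lambda_0(t)+\sum_{t_i<t}\phi(t-t_i)$, summing over $t_i\in\mathcal H_0\cup\mathcal E$. I would write $\mathcal E$ as a marked point process, marking each point by its parent: either "immigrant" or a specific earlier point. Conditionally on the past, the points generated during $[t,t+dt)$ are of two kinds:
\begin{itemize}
  \item immigrants, with rate $\lambda_0(t)$, since $\mathcal I$ is Poisson and independent of everything earlier;
  \item direct offspring of each earlier point $t_i$, with rate $\phi(t-t_i)$.
\end{itemize}
The second rate holds because each point's offspring process is Poisson and independent of all other points. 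Its restriction to $[t,\infty)$ is therefore independent of the history before $t$, apart from the parent's location. Summing these independent compensators gives the claimed intensity.

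I expect this last step to be the main obstacle. The offspring are generated generation by generation rather than in time order, so one must check that the time-ordered filtration does not leak information about future offspring. I would handle this with the independent-increments property of each parent's Poisson offspring process. That property yields the martingale $N(t)-\int_{t_0}^t\lambda^\star$, and I would then appeal to uniqueness of the law given the intensity \citep{daley}.
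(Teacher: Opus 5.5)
Your proposal is correct and follows essentially the same route as the paper's proof. Both use Lewis--Shedler thinning for the immigrants, Poisson restriction with inverse-transform sampling of the truncated exponential lags for the offspring, independent increments to justify seeding the queue with $\mathcal H_0$ and discarding pre-$t_0$ offspring, superposition for the intensity, and subcritical branching for termination. The only real difference is in Step~3, where you check the intensity through an explicit compensator/martingale argument and flag the non-chronological generation order, whereas the paper simply asserts the intensity ``by superposition'' after invoking Hawkes's cluster equivalence; your version is, if anything, slightly more careful.
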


\begin{proof}
    The validity follows from Poisson thinning and the immigration--birth
    representation of linear Hawkes processes, after verifying the conditioning
    and truncation ingredients for the present model.

    \medskip
    \noindent
    \textit{1. Immigrant generation.}
    By the thinning theorem of \citet{lewisSimulationNonhomogeneousPoisson1979},
    Step~1 generates an inhomogeneous Poisson process of immigrants \(\mathcal I\)
    on \((t_0,T]\) with intensity \(\lambda_0(t)=g\{\mu+x(t)^\top\gamma\}\),
    since the baseline is bounded by \(M<\infty\).

    \medskip
    \noindent
    \textit{2. Direct offspring distribution and lag inversion.}
    For any parent event occurring at time \(t_p\), its direct offspring form a
    Poisson process with lag intensity \(\phi(d)=\rho\beta e^{-\beta d}\) for \(d>0\).
    An offspring at time \(t_e=t_p+d\) falls in the forecast window \((t_0,T]\) if
    and only if \(t_0 < t_p+d \leq T\), which is equivalent to
    \(d\in(a(t_p),b(t_p)]\) with
    \(a(t_p)=\max(0,t_0-t_p)\) and \(b(t_p)=T-t_p\).
    When \(b(t_p)>a(t_p)\), the expected number of such offspring is
    \[
        \int_{a(t_p)}^{b(t_p)}\rho\beta e^{-\beta d}\,\mathrm{d}d
        =
        \rho\{e^{-\beta a(t_p)}-e^{-\beta b(t_p)}\}
        =
        m(t_p).
    \]
    By the properties of Poisson processes, the number of offspring in this window
    is distributed as \(\operatorname{Poisson}(m(t_p))\). Conditional on this
    count, the lags are independent and identically distributed with density
    proportional to \(\beta e^{-\beta d}\) on \((a(t_p),b(t_p)]\). Inverting the
    corresponding conditional cumulative distribution function yields the exact
    sampling formula \eqref{eq:supp-truncated-lag} implemented in Step~2 via the
    inverse transform method.

    \medskip
    \noindent
    \textit{3. Conditioning on \(\mathcal H_0\) and cluster superposition.}
    The cluster representation of \citet{Hawkes2} establishes that a linear Hawkes
    process is equivalent to the superposition of an immigrant process and
    independent branching trees of offspring generated by each event.
    Conditional on the observed history \(\mathcal H_0=\{t_i:t_i< t_0\}\), any
    descendant that occurred prior to \(t_0\) is already recorded and included in
    \(\mathcal H_0\).
    By the independent-increments property of Poisson processes, the subsequent
    offspring generated on \((t_0,T]\) by any observed event \(t_i\in\mathcal H_0\)
    depend only on its timestamp \(t_i\) and are conditionally independent of the
    portions of their offspring processes realised before \(t_0\).
    Initialising the queue with \(\mathcal P\leftarrow\operatorname{queue}(\mathcal H_0\cup\mathcal I)\)
    and restricting offspring to \((t_0,T]\) therefore correctly accounts for all
    future descendants of both past events and future immigrants without double-counting.
    By superposition, the conditional intensity at any \(t\in(t_0,T]\) is
    \[
        \lambda_0(t)+\sum_{t_i<t}\rho\beta e^{-\beta(t-t_i)}
        =
        \lambda^\star(t),
    \]
    which matches \eqref{eq:supp-intensity}.

    \medskip
    \noindent
    \textit{4. Termination of the algorithm.}
    The generation of descendants corresponds to a branching process with
    immigration, where each parent produces an expected number of offspring
    \(m(t_p)\leq\rho<1\) on \((t_0,T]\).
    Because the branching process is subcritical (\(0\leq\rho<1\)), the total
    number of descendants generated by any ancestor is almost surely finite.
    Moreover, every child occurs strictly later than its parent (\(t_e=t_p+D>t_p\)
    since \(D>0\)). The offspring distribution is truncated to the forecast
    window, so every generated child satisfies \(t_e\leq T\), and when
    \(b(t_p)\leq a(t_p)\), no offspring are generated.
    Hence, the parent queue \(\mathcal P\) empties almost surely after a finite
    number of operations, guaranteeing that Algorithm~\ref{alg:supp-cluster}
    terminates in finite steps.
\end{proof}

\paragraph{Remark: piecewise-constant covariates.}

In the present model, the piecewise-constant covariate representation permits
simple refinements of both methods.
Let
\[
    I_k=(\tau_{k-1},\tau_k]\cap(t_0,T],
    \qquad
    \lambda_0(t)=\lambda_{0k}\quad\text{on }I_k.
\]
For piecewise-constant covariates, the modified-thinning implementation may
use the exact local bound
\[
    M_k=\lambda_{0k}
    \qquad\text{on }I_k,
\]
provided that covariate-grid boundaries are treated as deterministic
state-update times. If the next boundary occurs before the proposed candidate
time, the algorithm advances to that boundary, updates the baseline bound, and
generates a new proposal.

For the cluster method, this means that immigrant thinning can be avoided altogether.
For every non-empty interval \(I_k\), the number of immigrants may be
generated directly as
\[
    N_k^{(0)}
    \sim
    \operatorname{Poisson}\{|I_k|\lambda_{0k}\},
\]
and, conditional on \(N_k^{(0)}\),
\[
    T_{k,j}^{(0)}
    \overset{\mathrm{iid}}{\sim}
    \mathcal U(I_k),
    \qquad j=1,\ldots,N_k^{(0)}.
\]
Both refinements preserve the target distribution while improving computational
efficiency.

\subsection{Computational cost and choice of algorithm}
\label{sec:supp-efficiency}

Algorithms~\ref{alg:supp-ogata} and~\ref{alg:supp-cluster} solve the same
simulation problem. Given the common input
\[
    (\mathcal H_0,x,\mu,\gamma,\rho,\beta,g,t_0,T),
\]
both algorithms ultimately return a chronologically ordered list of $N_{\mathrm F} = |\mathcal E|$ future events,
\[
    \mathcal E=(t_1,\ldots,t_{N_{\mathrm F}}),
    \qquad
    t_0<t_1<\cdots<t_{N_{\mathrm F}}\leq T,
\]
whose conditional distribution, given $(\mathcal H_0,x)$, is that of the
Hawkes process with intensity \eqref{eq:supp-intensity}. The two algorithms
therefore target the same conditional distribution, although they do not
generally return the same realised event configuration. Chronological ordering
is treated as part of the common required output, so the final sorting step of
the cluster algorithm is included in its cost.

The comparison below concerns only the generation of future event times.
To make this complexity analysis self-contained, we restate the notation for the relevant event counts:
\[
    H=|\mathcal H_0|,
    \qquad
    N_{\mathrm F}=|\mathcal E|,
    \qquad
    N_{\mathrm I}=|\mathcal I|.
\]
Thus, \(H\) is the number of observed historical events in $\mathcal H_0$,
\(N_{\mathrm F}\) is the number of returned future events in $(t_0,T]$ (already introduced above as $|\mathcal E|$), and
\(N_{\mathrm I}\) is the number of future immigrants.
Let $P_{\mathrm O}$ denote the number of candidate times within
$(t_0,T]$ examined by modified thinning, and let $P_{\mathrm I}$ denote the number
of within-window candidate immigrants examined when Step~1 of the cluster
algorithm is implemented by thinning. When immigrants are generated directly on
piecewise-constant intervals, no immigrant proposal count is required; the
corresponding work is instead proportional to the realised number $N_{\mathrm I}$ of immigrants.

\paragraph{Common computational target and output size.}

Because the two algorithms have the same conditional target distribution,
$N_{\mathrm F}$ has the same distribution under both algorithms. In
particular,
\[
    \bar N_{\mathrm F}
    =
    \mathbb E[N_{\mathrm F}\mid\mathcal H_0,x]
\]
is common to the two methods. Under $0\leq\rho<1$, it satisfies
\begin{equation}
    \label{eq:supp-expected-events}
    \bar N_{\mathrm F}
    \leq
    \frac{
        \Lambda_0(t_0,T)
        +
        A_{\mathrm H}
    }{1-\rho},
\end{equation}
where
\[
    \Lambda_0(t_0,T)
    =
    \int_{t_0}^T\lambda_0(t)\,\mathrm dt
\]
and
\begin{equation}
    \label{eq:supp-history-yield}
    A_{\mathrm H}
    =
    \sum_{t_i\in\mathcal H_0}m(t_i)
    =
    \frac{E_0}{\beta}
    \left\{
    1-\exp[-\beta(T-t_0)]
    \right\}
\end{equation}
is the expected number of direct future offspring generated on $(t_0,T]$ by
the observed history $\mathcal H_0$, with $m(t_i)$ defined in
\eqref{eq:supp-truncated-mean}. The identity in \eqref{eq:supp-history-yield}
follows from substituting $a(t_i)=t_0-t_i$ and $b(t_i)=T-t_i$ into
\eqref{eq:supp-truncated-mean} and using the definition of $E_0$. To derive
\eqref{eq:supp-expected-events}, consider the cluster construction. Every
future event is either an immigrant or a direct offspring of an observed or
future event. Taking conditional expectations therefore gives
\begin{align*}
    \bar N_{\mathrm F}
    ={}    &
    \Lambda_0(t_0,T)
    +
    A_{\mathrm H}
    +
    \mathbb E\left[
        \sum_{t_j\in\mathcal E}m(t_j)
        \mathrel{\Big|}\mathcal H_0,x
    \right]  \\
    \leq{} &
    \Lambda_0(t_0,T)
    +
    A_{\mathrm H}
    +
    \rho\bar N_{\mathrm F},
\end{align*}
because $m(t_j)\leq\rho$ for every $t_j\in(t_0,T]$. Rearranging this
inequality yields \eqref{eq:supp-expected-events}. Moreover, immigrants and
direct offspring of the observed history are subsets of the future event set, so
\[
    \bar N_{\mathrm F}
    \geq
    \Lambda_0(t_0,T)+A_{\mathrm H}.
\]
Hence,
\begin{equation}
    \label{eq:supp-expected-events-bounds}
    \Lambda_0(t_0,T)+A_{\mathrm H}
    \leq
    \bar N_{\mathrm F}
    \leq
    \frac{\Lambda_0(t_0,T)+A_{\mathrm H}}{1-\rho}.
\end{equation}
Because the common required output is an explicit list of
\(N_{\mathrm F}\) event times, every algorithm must generate and store each
event at least once. Consequently, generating the output requires at least
\(O(N_{\mathrm F})\) operations on each sample path, and at least
\(O(\bar N_{\mathrm F})\) operations in expectation.

\paragraph{Cost of modified thinning.}

For the exponential triggering kernel, the observed history contributes to
future simulation only through the scalar \(E_0\) defined in
\eqref{eq:supp-initial-excitation}.
Computing \(E_0\) directly from \(\mathcal H_0\) requires \(O(H)\)
operations. If \(E_0\) is already available (for example, from a preceding
likelihood or posterior computation), the initialisation cost is \(O(1)\).

After initialisation, each candidate proposal requires a constant number of
elementary operations and random variates. The realised modified-thinning cost is
therefore $O(H+P_{\mathrm O})$ when $E_0$ must be computed, and $O(P_{\mathrm O})$
when it is already available. Since every future event corresponds to one accepted proposal,
\[
    P_{\mathrm O}\geq N_{\mathrm F}.
\]
Provided $\bar N_{\mathrm F}>0$, define the average acceptance efficiency by
\begin{equation}
    \label{eq:supp-average-acceptance}
    \bar p_{\mathrm{acc}}
    =
    \frac{
        \mathbb E[N_{\mathrm F}\mid\mathcal H_0,x]
    }{
        \mathbb E[P_{\mathrm O}\mid\mathcal H_0,x]
    }.
\end{equation}
The expected proposal count is therefore
\begin{equation}
    \label{eq:supp-thinning-expected-candidates}
    \mathbb E[P_{\mathrm O}\mid\mathcal H_0,x]
    =
    \frac{\bar N_{\mathrm F}}{\bar p_{\mathrm{acc}}}.
\end{equation}
Consequently, the expected total cost of modified thinning is
\[
    O\left(
    H
    +
    \frac{\bar N_{\mathrm F}}{\bar p_{\mathrm{acc}}}
    \right)
\]
when $E_0$ is computed from $\mathcal H_0$, and $O(\bar N_{\mathrm F}/\bar p_{\mathrm{acc}})$ when $E_0$ is already available.
If \(\bar N_{\mathrm F}=0\), the proposal cost is instead described directly
by \(\mathbb E[P_{\mathrm O}\mid\mathcal H_0,x]\), which may remain positive if
the dominating rate is not identically zero.
Accepted events are generated in chronological order. Hence, no subsequent
sorting is required. The algorithm uses $O(1)$ algorithmic working memory,
excluding storage of the covariate trajectory and returned event list, because
its dynamic state consists only of the current time and the scalar excitation
state. Including the returned event times, its total storage is $O(N_{\mathrm F})$.

The algorithm-specific overhead is therefore the number $P_{\mathrm
  O}-N_{\mathrm F}$ of rejected proposals. At a candidate time
$s=t+\Delta>t$ proposed from the current state $(t,E)$, the local
rejection probability decomposes into two additive components:
\begin{equation}
    \label{eq:supp-ogata-rejection}
    1-p_{\mathrm{acc}}(s\mid t,E)
    =
    \frac{M-\lambda_0(s)}{M+E}
    +
    \frac{E\{1-\exp[-\beta(s-t)]\}}{M+E}.
\end{equation}
The first term arises from the gap $M-\lambda_0(s)$ between the dominating
baseline bound and the true baseline intensity, whereas the second term is the
contribution to the rejection probability caused by excitation decay between $t$ and $s$.

Conditional on the current excitation $E$, the candidate waiting time is $\Delta=s-t\sim\operatorname{Exp}(M+E)$, which satisfies
\[
    \mathbb E\left[
        1-\exp(-\beta\Delta)
        \mid E
        \right]
    =
    \frac{\beta}{M+E+\beta}.
\]
We average the excitation-decay term over
\(\Delta\sim\operatorname{Exp}(M+E)\). Its expected contribution to the
rejection probability is
\begin{equation}
    \label{eq:supp-excitation-decay-diagnostic}
    D(E)
    =
    \mathbb E\left[
        \frac{E\{1-\exp(-\beta\Delta)\}}{M+E}
        \mathrel{\Big|}E
        \right]
    =
    \frac{E}{M+E}
    \frac{\beta}{M+E+\beta}.
\end{equation}
Thus, $D(E)$ factorises as the excitation share $E/(M+E)$ in the
dominating rate multiplied by the expected fractional decay
$\beta/(M+E+\beta)$ of that excitation over the candidate waiting
interval.  Consequently, excitation decay creates substantial
rejection only when the excitation forms an appreciable fraction of
the dominating rate and decays significantly before the next proposal.
If $\lambda_0$ is piecewise constant and covariate-grid boundaries are
treated as deterministic state updates as described above, the exact
local bound $M_k=\lambda_{0k}$ is used on $I_k$, eliminating baseline
overestimation within each interval.

\paragraph{Cost of the cluster representation.}

For Algorithm~\ref{alg:supp-cluster} as stated, all $H$ observed events are
inserted into the parent queue and processed once. Each future event is also
generated, inserted into the output, and processed as a parent exactly once.
The descendant-generation stage therefore requires
\[
    O(H+N_{\mathrm F})
\]
operations. It has no rejection overhead because all offspring sampled from
the truncated offspring distribution fall in the forecast window.

The immigrant-generation stage has cost $O(P_{\mathrm I})$. Under thinning
with the global bound $M$,
\[
    \mathbb E[P_{\mathrm I}\mid x]
    =
    M(T-t_0).
\]
Provided $\Lambda_0(t_0,T)>0$, define the average immigrant acceptance
efficiency by
\begin{equation}
    \label{eq:supp-immigrant-efficiency}
    \bar p_0
    =
    \frac{
        \mathbb E[N_{\mathrm I}\mid x]
    }{
        \mathbb E[P_{\mathrm I}\mid x]
    }
    =
    \frac{\Lambda_0(t_0,T)}{M(T-t_0)}.
\end{equation}
Thus,
\begin{equation}
    \label{eq:supp-cluster-expected-candidates}
    \mathbb E[P_{\mathrm I}\mid x]
    =
    \frac{\Lambda_0(t_0,T)}{\bar p_0}.
\end{equation}
If $\Lambda_0(t_0,T)=0$, the baseline vanishes almost everywhere on
$(t_0,T]$ and the immigrant-generation stage may be omitted.
For piecewise-constant covariates, immigrants can instead be generated
directly on each non-empty interval $I_k$. In that case, there are no rejected
immigrant proposals, and the immigrant-generation cost is proportional to the
realised number $N_{\mathrm I}$ of immigrants.

Unlike modified thinning, the basic queue implementation does not generate all
events in chronological order. If the common output is required to be
chronologically ordered, sorting adds
\[
    O(N_{\mathrm F}\log N_{\mathrm F})
\]
operations. Thus, conditional on the realised counts, the total cost of the
cluster algorithm as stated is
\[
    O\left(
    H+P_{\mathrm I}+N_{\mathrm F}
    +N_{\mathrm F}\log N_{\mathrm F}
    \right).
\]
When direct immigrant generation is available, this becomes
\[
    O\left(
    H+N_{\mathrm F}
    +N_{\mathrm F}\log N_{\mathrm F}
    \right),
\]
because $N_{\mathrm I}\leq N_{\mathrm F}$.

Total storage is bounded by $O(H+N_{\mathrm F})$. Excluding the common output,
the working memory can still depend on the number of parents awaiting processing.

\paragraph{Like-for-like comparison.}

Collecting the preceding results, the realised costs are
\begin{align*}
    C_{\mathrm{thin}}
     & =
    O\left(H+P_{\mathrm O}\right),
    \\
    C_{\mathrm{cluster}}
     & =
    O\left(
    H+P_{\mathrm I}+N_{\mathrm F}
    +N_{\mathrm F}\log N_{\mathrm F}
    \right),
\end{align*}
where $C_{\mathrm{thin}}$ reduces to $O(P_{\mathrm O})$ when $E_0$ is already available.
The corresponding expected costs satisfy
\begin{align}
    \mathbb E[C_{\mathrm{thin}}\mid\mathcal H_0,x]
     & =
    O\left(
    H
    +
    \frac{\bar N_{\mathrm F}}{\bar p_{\mathrm{acc}}}
    \right),
    \label{eq:supp-expected-cost-thinning}
    \\
    \mathbb E[C_{\mathrm{cluster}}\mid\mathcal H_0,x]
     & =
    O\left(
    H
    +
    \frac{\Lambda_0(t_0,T)}{\bar p_0}
    +
    \bar N_{\mathrm F}
    +
    \mathbb E[N_{\mathrm F}\log N_{\mathrm F}\mid\mathcal H_0,x]
    \right),
    \label{eq:supp-expected-cost-cluster}
\end{align}
where the first expectation reduces to $O(\bar N_{\mathrm F}/\bar p_{\mathrm{acc}})$ when $E_0$ is already available.
When immigrants are generated directly, the expected cluster cost reduces to
\[
    O\left(
    H
    +
    \bar N_{\mathrm F}
    +
    \mathbb E[N_{\mathrm F}\log N_{\mathrm F}\mid\mathcal H_0,x]
    \right),
\]
because $\mathbb E[N_{\mathrm I}\mid x]=\Lambda_0(t_0,T)\leq\bar
N_{\mathrm F}$.  Both methods incur the unavoidable $O(N_{\mathrm F})$
time required to materialise the common output and $O(N_{\mathrm F})$
storage if the complete event list is retained.  The term $P_{\mathrm
  O}-N_{\mathrm F}$ measures the rejection overhead of modified
thinning, while $P_{\mathrm I}-N_{\mathrm I}$ measures the
immigrant-thinning overhead of the cluster algorithm. These
expressions compare the algorithms for the same conditioning
information, the same forecast window, and the same required output.

\paragraph{Regimes of poor performance.}

The preceding complexity expressions do not provide a universal ranking of the
two algorithms, because they suppress implementation-dependent constants and
depend on random output sizes. To obtain guidelines on when to use one or the
other, we instead study the regimes in which either algorithm is expected to
perform poorly.

The analysis identifies four sources of overhead: rejected proposals in
modified thinning, rejected immigrant proposals in the cluster algorithm,
individual processing of a long observed history, and chronological sorting
of a large future event set.

\medskip
\noindent
\textit{1. When modified thinning is expected to perform poorly.}
Modified thinning has expected proposal cost $\bar N_{\mathrm F}/\bar p_{\mathrm{acc}}$.
It is therefore inefficient whenever $\bar p_{\mathrm{acc}}\ll 1$. If
\(\bar p_{\mathrm{acc}}\leq\varepsilon\), then the ratio of the expected
number of proposals to the expected number of returned events is at least
\(\varepsilon^{-1}\). For example,
\(\bar p_{\mathrm{acc}}=0.1\) corresponds to ten expected proposals per
expected returned event.
Equation~\eqref{eq:supp-ogata-rejection} identifies two distinct regimes that
can make $\bar p_{\mathrm{acc}}$ small:
\begin{itemize}
    \item \textit{Loose baseline domination:} When $M\gg\lambda_0(s)$ across most
          of $(t_0,T]$, the gap $M-\lambda_0(s)$ between the upper bound and the
          actual baseline lowers proposal acceptance. The immigrant efficiency
          $\bar p_0$ defined in \eqref{eq:supp-immigrant-efficiency} measures the
          average tightness of the global baseline bound when $\Lambda_0(t_0,T)>0$.
          In the limiting baseline-only case $E(t)=0$, and when the same global bound
          $M$ is used throughout the forecast window, the ratio of expected accepted
          events to expected within-window proposals is $\bar p_0$. More generally, when the
          excitation component remains weak, one expects $\bar p_{\mathrm{acc}}$ to
          be close to $\bar p_0$. Consequently, $\bar p_0\ll 1$ indicates substantial
          baseline-proposal inefficiency; this becomes computationally important
          when the baseline proposal volume $M(T-t_0)=\Lambda_0(t_0,T)/\bar p_0$ is also
          appreciable. This baseline-bound overhead is removed by the
          interval-specific refinement described above.
    \item \textit{Rapid excitation decay between proposals.}
          The quantity
          \[
              D_0
              =
              \frac{E_0}{M+E_0}
              \frac{\beta}{M+E_0+\beta}
          \]
          measures the initial contribution of excitation decay to rejection.
          A large \(D_0\) warns that modified thinning may initially waste
          proposals, although it does not bound the overall acceptance efficiency
          because the excitation state evolves during simulation.
\end{itemize}
Modified thinning is therefore especially vulnerable when $\bar p_0\ll 1$ with
appreciable baseline proposal volume $M(T-t_0)$, or when $D_0$ is large.

\medskip
\noindent
\textit{2. When the cluster algorithm is expected to perform poorly.}
Equation~\eqref{eq:supp-expected-cost-cluster} identifies three distinct
problematic regimes for the cluster algorithm:
\begin{itemize}
    \item \textit{Inefficient immigrant thinning:} When immigrants are generated
          by global thinning, the expected number of immigrant proposals is
          $\mathbb E[P_{\mathrm I}\mid x]=M(T-t_0)$. When $\Lambda_0(t_0,T)>0$, this
          satisfies $\mathbb E[P_{\mathrm I}\mid x]=\Lambda_0(t_0,T)/\bar p_0$ by
          \eqref{eq:supp-immigrant-efficiency}. If $\bar p_0\leq\varepsilon$, then the ratio
          of the expected number of immigrant proposals to the expected number of
          returned immigrants is at least $\varepsilon^{-1}$ (for example,
          $\bar p_0=0.05$ corresponds to twenty expected proposals per expected
          immigrant). Immigrant thinning may be costly when $\bar p_0\ll 1$ and the
          baseline proposal volume $M(T-t_0)=\Lambda_0(t_0,T)/\bar p_0$ is large.
          This immigrant-thinning overhead is absent under the direct
          interval-specific generation described above.
    \item \textit{A long observed history with little future influence:} The
          cluster algorithm as stated processes every historical event individually,
          costing $O(H)$ operations. By \eqref{eq:supp-history-yield}, the total
          expected direct future offspring generated by $\mathcal H_0$ is
          $A_{\mathrm H}$. The average expected future offspring yield per
          historical event is $A_{\mathrm H}/H$, for $H>0$. A small value indicates
          that most of the $O(H)$ history-processing cost is associated with events
          having little remaining influence on the forecast window (as occurs when
          historical events occurred far in the past, $\beta$ is large, or the
          forecast horizon $T-t_0$ is short).
          Modified thinning avoids processing the historical events individually during
          the simulation because their joint contribution is represented by the scalar
          \(E_0\). If \(E_0\) is already available, it also avoids the initial
          \(O(H)\) pass through the history.
          A conservative, readily computable warning condition is
          \[
              H \gg \Lambda_0(t_0,T)+A_{\mathrm H},
              \qquad\text{with}\qquad
              \frac{A_{\mathrm H}}{H}\ll 1,
          \]
          indicating that many historical records are processed despite having
          negligible remaining direct influence.
    \item \textit{Chronological sorting of large event sets:} The basic queue
          implementation generates events in genealogical tree order rather than
          chronological order. Returning chronologically ordered output adds a sorting
          cost of $O(N_{\mathrm F}\log N_{\mathrm F})$ operations. With the
          convention $0\log 0 = 0$, convexity of $z\mapsto z\log z$ gives
          \[
              \mathbb E[N_{\mathrm F}\log N_{\mathrm F}\mid\mathcal H_0,x]
              \geq
              \bar N_{\mathrm F}\log\bar N_{\mathrm F}.
          \]
          Therefore, when $\Lambda_0(t_0,T)+A_{\mathrm H}$ is substantially larger
          than one, the lower bound in \eqref{eq:supp-expected-events-bounds} implies
          a correspondingly large lower bound on the expected sorting cost. A large
          value of
          \[
              \frac{\Lambda_0(t_0,T)+A_{\mathrm H}}{1-\rho}
          \]
          provides a more conservative warning that branching may further increase
          the event volume, but it does not by itself imply that the expected event
          count is large on the finite forecast window. In this large-event-volume
          regime, sorting overhead and parent-queue storage become significant
          disadvantages of the cluster representation.
\end{itemize}

The table below
summarises the preceding analysis.
\begin{center}
    \small
    \renewcommand{\arraystretch}{1.25}
    \begin{tabularx}{\textwidth}{
        >{\raggedright\arraybackslash}p{0.24\textwidth}
        >{\raggedright\arraybackslash}X
        >{\raggedright\arraybackslash}p{0.25\textwidth}
        }
        \hline
        Observable condition                                                                             & Inefficiency mechanism                                               & Vulnerable method                          \\
        \hline
        $\bar p_0 \ll 1$ and $M(T-t_0)$ large                                                            & Large baseline-driven proposal volume caused by a loose global bound & Both methods under a global baseline bound \\
        $D_0$ large                                                                                      & Significant initial excitation-bound decay                           & Modified thinning                          \\
        $H$ large and $A_{\mathrm H}/H \ll 1$                                                            & Many historical events with low future yield                         & Cluster algorithm as stated                \\
        $\Lambda_0(t_0,T)+A_{\mathrm H}$ large                                                           & Large lower bound on the expected future event count                 & Cluster sorting and queue storage          \\
        $\rho$ close to one, appreciable remaining horizon, and non-negligible $\Lambda_0+A_{\mathrm H}$ & Potentially substantial additional branching                         & Both methods, especially cluster sorting   \\
        \hline
    \end{tabularx}
\end{center}

These diagnostics identify potential failure regimes rather than exact
crossover points. Modified thinning is expected to be attractive when tight local
bounds are available, when $E_0$ is already known, or when events must be
generated in chronological order. Excluding the covariate trajectory and
returned event list, it requires only $O(1)$ algorithmic working memory.

The cluster algorithm is expected to be attractive when immigrants can be
generated directly, when the observed history is not disproportionately long
relative to its remaining influence, or when chronological ordering is
unnecessary or inexpensive.


\section{Convergence Diagnostics}
\label{diag}
This appendix reports convergence diagnostics for the Markov Chain
Monte Carlo (MCMC) simulations used to estimate the parameters of the
model (\ref{condIntHawkesInh}), applied to spore release data in
Section \ref{scab}. All posterior estimates and MCMC simulations were
performed using RStan, the R interface to Stan. Convergence is
assessed using traceplots for the key parameters of the model, as well
as standard numerical diagnostics, including the Gelman–Rubin
statistic ($\hat{R}$) and the effective sample size (ESS).

\subsection{Traceplots}

Figure \ref{fig:diag} reports traceplots for the main parameters of
the model. The chains mix well, and no visible trends are observed,
suggesting satisfactory convergence.
\begin{figure}
    \centering
    \includegraphics[width=\linewidth]{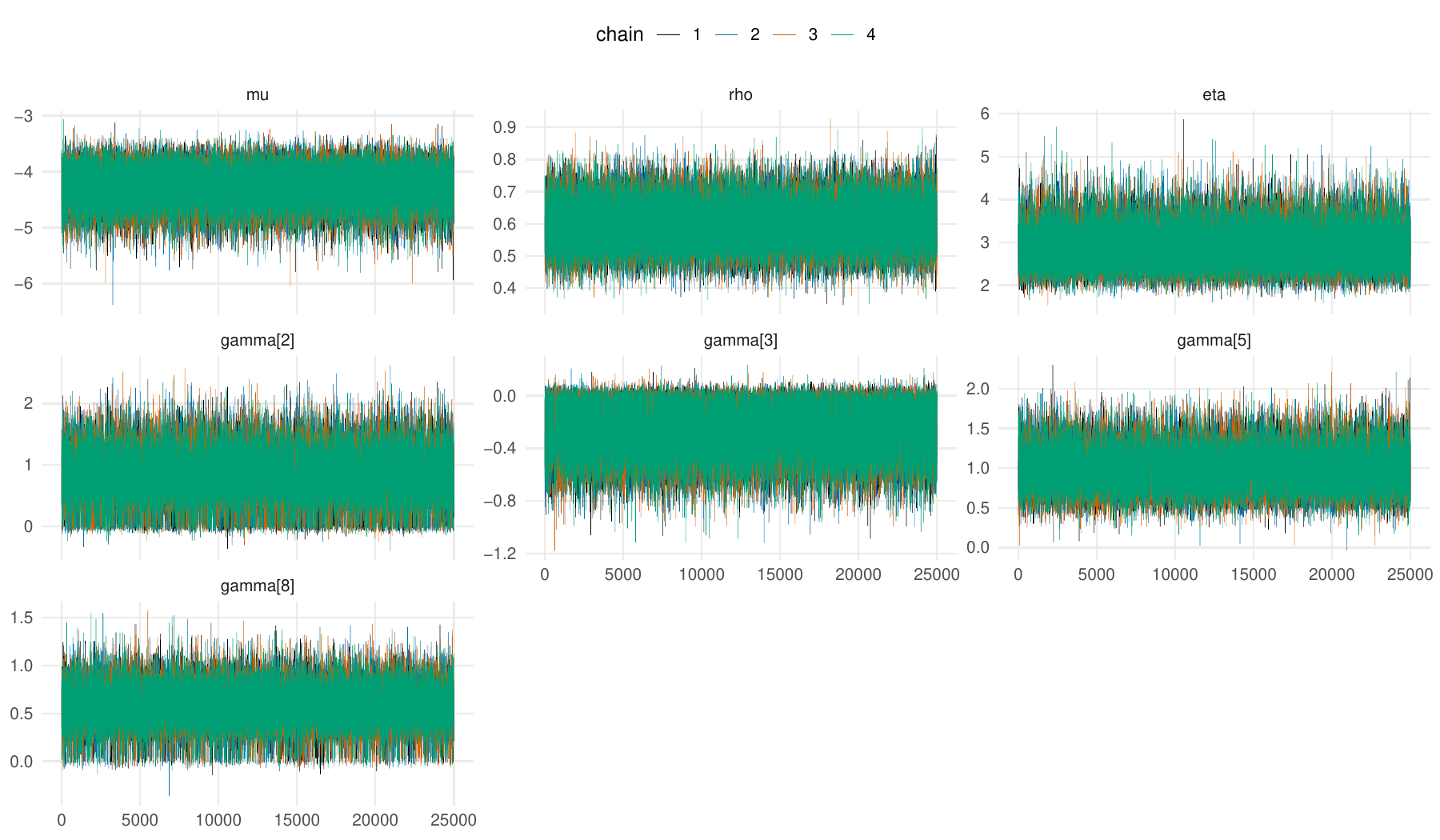}

    \caption{Traceplots for  the constant $\mu$, the parameters $\rho$ and $\eta=\frac{1}{\beta}$ of
        the excitation function and  the regression coefficients
        $\gamma_2$,$\gamma_3$,$\gamma_5$ and $\gamma_8$ of the selected covariates.}
    \label{fig:diag}
\end{figure}

\subsection{Convergence Statistics}
Table \ref{tab:convergence} reports Gelman–Rubin statistics and effective sample sizes
for the parameters of interest. All $\hat{R}$ statistics are close to 1, indicating
convergence of the Markov chains.

\begin{table}[ht]
    \centering
    \caption{Convergence diagnostics for the selected parameters. }
    \label{tab:convergence}
    \begin{tabular}{lcc}
        \hline
        {Parameter} & {$\hat{R}$} & {ESS}  \\
        \hline
        $\mu$       & 1.00        & 50,662 \\
        $\eta$      & 1.00        & 73,988 \\
        $\rho$      & 1.00        & 69,810 \\
        $\gamma_2$  & 1.00        & 13,359 \\
        $\gamma_3$  & 1.00        & 15,187 \\
        $\gamma_5$  & 1.00        & 16,542 \\
        $\gamma_8$  & 1.00        & 27,049 \\
        \hline
    \end{tabular}
\end{table}

\bibliographystyle{apalike-jabes}

\newpage
\bibliography{references}

\end{document}